\newcommand{\PreserveBackslash}[1]{\let\temp=\\#1\let\\=\temp}
\newcolumntype{C}[1]{>{\PreserveBackslash\centering}p{#1}}
\newcolumntype{R}[1]{>{\PreserveBackslash\raggedleft}p{#1}}
\newcolumntype{L}[1]{>{\PreserveBackslash\raggedright}p{#1}}
\newtheorem{Theorem}{Theorem}
\newtheorem{Lemma}{Lemma}
\newtheorem{Definition}{Definition}
\newtheorem{Remark}{Remark}
\newtheorem{proposition}{Proposition}
\newtheorem{exmp}{Example}
\DeclareMathOperator*{\argmax}{\arg\!\max}
\def\hlinewd#1{%
\noalign{\ifnum0=`}\fi\hrule \@height #1 %
\futurelet\reserved@a\@xhline}
\patchcmd{\maketitle}{\@fnsymbol}{\@alph}{}{}  % Footnote numbers from symbols to small letters
\def\hlinewd#1{%
\noalign{\ifnum0=`}\fi\hrule \@height #1 %
\futurelet\reserved@a\@xhline}
\patchcmd{\maketitle}{\@fnsymbol}{\@alph}{}{}  % Footnote numbers from symbols to small letters
\title{Centralized Caching and Delivery of Correlated Contents over Gaussian Broadcast Channels}
\author{
    \IEEEauthorblockN{Qianqian Yang\IEEEauthorrefmark{1}, Parisa Hassanzadeh\IEEEauthorrefmark{2}, Deniz G\"und\"uz\IEEEauthorrefmark{1}, Elza Erkip\IEEEauthorrefmark{2}
    \thanks{This work has been supported in part by NSF under grant \#1619129, in part by NYU WIRELESS Industrial Affiliates Program, and in part by the European Research Council project BEACON under grant number 677854.}}
    
    \IEEEauthorblockA{%
    \IEEEauthorrefmark{1}Electrical and Electronic Engineering Department, Imperial College London, \{q.yang14, d.gunduz\}@imperial.ac.uk\\\
    \IEEEauthorrefmark{2}Electrical and Computer Engineering Department, New York University, Brooklyn, NY. \{ph990, elza\}@nyu.edu
  }
 
}
\date{}
\begin{document}

\maketitle
\begin{abstract}
Content\makeatletter{\renewcommand*{\@makefnmark}{}\footnotetext{This paper was presented in part at the IEEE Int'l Symp. on Modeling and Opt. in Mobile, Ad Hoc, and Wireless Netw. (WiOpt), Shanghai, China, May 2018 \cite{qianwiopt2018}.}\makeatother} delivery in a multi-user cache-aided broadcast network is studied, where a server holding a database of correlated contents communicates with the users over a Gaussian broadcast channel (BC). The minimum transmission power required to satisfy all possible demand combinations is studied, when the users are equipped with caches of equal size. Assuming uncoded cache placement, a lower bound on the required transmit power as a function of the cache capacity is derived. An achievable centralized caching scheme is proposed, which not only utilizes the user's local caches, but also exploits the correlation among the contents in the database. The performance of the scheme, which provides an upper bound on the required transmit power for a given cache capacity, is characterized. Our results indicate that exploiting the correlations among the contents in a cache-aided Gaussain BC can provide significant energy savings.    
\end{abstract}

\section{Introduction}\label{intro}
Thanks to the decreasing cost and increasing capacity of storage available at mobile devices, \textit{proactive caching} has received significant attention in recent years as a low-cost and effective solution to keep up with the exponentially growing mobile data traffic\cite{samueljsac,MaddahAliCentralized, MohammadQianDenizITW}.
%\cite{GregoryDtoD, MaddahAliCentralized, MaddahAliDecentralized, MohammadQianDenizITW, samuel2017},
Proactively storing popular contents in cache memories distributed across the network during off-peak traffic periods can greatly reduce both the network congestion and the latency during peak traffic hours. \textit{Coded caching} \cite{MaddahAliCentralized} exploits the broadcast nature of wireless delivery and the contents proactively cached in users' local memories to create multicasting opportunities, even when the users request distinct files, further boosting the benefits of caching. The significant gains of coded caching over traditional uncoded caching schemes have inspired numerous studies, among which \cite{ hassanzadeh2016correlation,qiandeniz2018icc, hassanzadeh2017broadcast, hassanzadeh2017rate, hassanzadeh2016cache, yu2017exact, MohammadDenizerasureTCom, amiri2017gaussian, shirin2016broadcastit, shirin2017broadcastit, Shengerasure2016} are most related to this paper.   

Most of the literature on coded caching considers independent files in the library. However, in many practical settings, files in a cache library can be highly correlated. For example, if we treat chunks of a video file as distinct files to be cached and delivered, these video chunks are typically correlated. Similarly when delivering software updates, each user may request a different version, or updates for a different subset of software packages, which may lead to correlations among requests. In the file correlation model, used in this paper and introduced in \cite{qiandeniz2018icc}, we assume that any subset of the files in the library exclusively share a common part. We present an example of the considered correlation model for three files in Fig. \ref{fig:correlationmodel}, where the common parts of different subsets of files are shown with different colors. This model is fairly general to capture message correlations on the symbol level modeled by arbitrary joint distributions, as more commonly considered in multi-terminal source coding problems\cite{slepian1973noisycoding}, when it is used in conjunction with the  Gray-Wyner network \cite{gray1974source}, which, as described in \cite{hassanzadeh2018rate}, encodes the correlated files into messages with the correlation structure considered in this paper. %This model is fairly general to capture message correlations in the sense of \cite{slepian1973correlated}, but cannot model file correlations on the symbol level modeled by arbitrary joint distributions, as more commonly considered in multi-terminal source coding problems\cite{slepian1973noisycoding}. 

\begin{figure}
     \centering
     \includegraphics[width=0.45\linewidth]{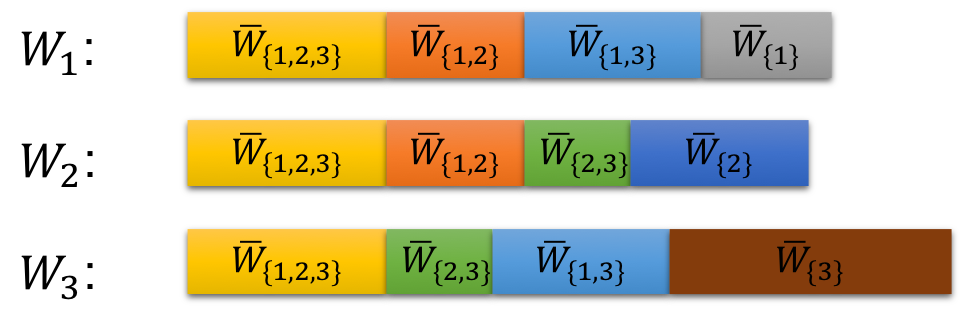}
     \caption{An example of $N=3$ correlated files. Each file consists of $4$ different subfiles with different \textit{commonness} levels.}
     \label{fig:correlationmodel}
 \end{figure}

Delivering correlated contents over an error-free shared link with receiver caches is considered in \cite{hassanzadeh2016correlation, hassanzadeh2017broadcast, hassanzadeh2017rate,qiandeniz2018icc}. In \cite{hassanzadeh2016correlation}, correlations among an arbitrary number of files is exploited by identifying the most representative files, which are then used as references for  compressing the remaining files with respect to the representatives. Correlation among two files is fully exploited in \cite{hassanzadeh2017rate}, in which the files are initially compressed using Gray-Wyner source coding, and an optimal caching scheme is derived for the two-receiver network. This scheme is generalized to more files in \cite{hassanzadeh2017broadcast}, which is optimal for large cache sizes. Arbitrary numbers of users and files are considered in \cite{qiandeniz2018icc}, with the file correlation model illustrated in Fig. \ref{fig:correlationmodel}.

The works in \cite{MohammadDenizerasureTCom, amiri2017gaussian, shirin2016broadcastit, shirin2017broadcastit, Shengerasure2016} consider a more realistic noisy broadcast channel (BC) model from the server to the user. In \cite{shirin2017broadcastit}, the authors consider a degraded BC and a total memory budget, and optimize the cache assignment to the users depending on their channel capacities. A different perspective is taken in \cite{amiri2017gaussian}, which highlights the benefits of caching and coded delivery in terms of the energy-efficiency in a Gaussian BC. However, neither of these papers consider correlation among files.

%extensions \cite{MohammadQianDenizITW, MaddahAliDecentralized, hassanzadeh2016correlation, hassanzadeh2017broadcast, hassanzadeh2017rate,qiandeniz2018icc, yang2017audience, yang2016coded, ElzaDistortionMemoryTradeoff, yu2017exact, MohammadDenizerasureTCom, amiri2017gaussian, shirin2016broadcastit, shirin2017broadcastit, Shengerasure2016}. These include some follow-up works improving the fundamental limits of caching \cite{MohammadQianDenizITW, yu2017exact}, and more general network settings such as decentrailzed systems \cite{MaddahAliDecentralized}, correlated content\cite{hassanzadeh2016correlation, hassanzadeh2017broadcast, hassanzadeh2017rate,qiandeniz2018icc}, non-uniform popularities \cite{yang2017audience}, lossy requests \cite{yang2016coded, ElzaDistortionMemoryTradeoff}, etc. Furthermore, some recent works have relaxed the unrealistic assumption of an error-free bit-pipe link shared by user by replacing it with noisy BCs\cite{MohammadDenizerasureTCom, amiri2017gaussian, shirin2016broadcastit, shirin2017broadcastit, Shengerasure2016}. \cite{shirin2016broadcastit, Shengerasure2016, MohammadDenizerasureTCom} characterize the capacity-memory tradeoff for erasure broadcast networks with caches available at the receivers. Given a total memory budget, the cache assignment is optimized in terms of the channel capacity in \cite{shirin2017broadcastit} for degraded BCs. \cite{amiri2017gaussian} takes a different perspective, where the benefits of caching and coded delivery are highlighted with respect to the energy efficiency of Gaussian BCs.

Following up on \cite{amiri2017gaussian}, in this paper we consider a degraded Gaussian BC model, but rather than independent files, we assume that the files in the library can be arbitrarily correlated as modeled in \cite{qiandeniz2018icc}, and illustrated in Fig. \ref{fig:correlationmodel}. In this model, we have a total of $2^N-1$ subfiles (which can be of size zero), each shared exclusively by a distinct subset of users. We evaluate the performance of this system in terms of the minimum transmission power required to satisfy any demand combination. We derive a lower bound on the transmission power assuming uncoded cache placement, and propose an upper bound, obtained by employing superposition coding and power allocation. For small cache sizes, coded placement and joint encoding scheme is also considered as coded placement is known to better exploit limited cache capacities\cite{ZhiChenXOR} and in asymmetric scenarios \cite{qiandeniz2018tit}. The proposed scheme further exploits the degraded nature of the BC channel by jointly encoding cached contents of the weak users together with the messages targeted at them. This allows the stronger users to receive both the cache contents and the delivered messages of weak users at no additional energy cost. The required transmission power by this scheme meets the derived lower bound that assumes uncoded placement. Through simulations, we show that the proposed correlation-aware joint caching and channel coding scheme reduces the transmission power significantly compared to correlation-ignorant schemes.

The paper is organized as follows. Section~\ref{sys} presents the system model and the problem formulation. A lower bound is presented in Section~\ref{sec:main results}. Two centralized caching and delivery schemes are proposed in Sections~\ref{sec:scheme sep} and \ref{sec:scheme joint} based, respectively, on separate and joint cache-channel coding. Numerical results comparing the proposed upper and lower bounds are provided in Section~\ref{sec:numerical}, and the paper is concluded in Section~\ref{sec:conc}.

\textit{Notations:} The set of integers $\left\{ i, ..., j \right\}$, where $i \le j$, is denoted by $\left[ i:j \right]$, and for $q\in \mathbbm{R}^+$, the set $[1: \lceil q\rceil]$ is denoted shortly by $[q]$. For sets $\mathcal{A}$ and $\mathcal{B}$, we define $\mathcal{A} \backslash \mathcal{B}\triangleq\{x: x \in \mathcal{A}, x\notin \mathcal{B}\}$, and $\left| \mathcal{A} \right|$ denotes the cardinality of $\mathcal{A}$. $\binom{j}{i}$ represents the binomial coefficient if $j\geq i$; otherwise, $\binom{j}{i}=0$. For event $E$, $\mathbbm{1}\{E\}=1$ if $E$ is true; and $\mathbbm{1}\{E\}=0$, otherwise. 

%\textit{Notations:} The set of integers $\left\{ i, ..., j \right\}$, where $i \le j$, is denoted by $\left[ i:j \right]$, particularly, $\left\{1, ..., j \right\}$ is denoted by $\left[j \right]$. For sets $\mathcal{A}$ and $\mathcal{B}$, we define $\mathcal{A} \backslash \mathcal{B}\triangleq\{x: x \in \mathcal{A}, x\notin \mathcal{B}\}$, and $\left| \mathcal{A} \right|$ denotes the cardinality of $\mathcal{A}$. $\binom{j}{i}$ represents the binomial coefficient if $j\geq i$; otherwise, $\binom{j}{i}=0$. For event $E$, $\mathbbm{1}\{E\}=1$ if $E$ is true; and $\mathbbm{1}\{E\}=0$, otherwise.

\section{System Model}\label{sys}
%We study caching and delivery of correlated content over a $K$-user Gaussian broadcast channel (BC). 
Consider a server that holds a database of $N$ correlated files, denoted by $\mathbf{W}=(W_1, ..., W_N)$, each composed of a group of independent subfiles. File $W_i$, $i\in [N]$, consists of $2^{N-1}$ independent subfiles, i.e.,
\small
\begin{equation}
W_i=\{\overline{W}_{\mathcal{S}}: \mathcal{S}\subseteq[N],\;i \in \mathcal{S}\},\notag
%\bigcup\limits_{\substack{\mathcal{S}\subseteq\{1,\dots,N\}\\ i \in \mathcal{S}}}\overline{W}_{\mathcal{S}},~~~\forall i\in \{1,\dots,N\}.
\end{equation}
\normalsize
where $\overline{W}_{\mathcal{S}}$ denotes the subfile shared exclusively by the files $\{W_i: i \in \mathcal{S}\}$. For $\mathcal{S}\subseteq [N]$, $|\mathcal{S}|=\ell$, we say that subfile $\overline{W}_{\mathcal{S}}$ has a {\em commonness level of $\ell$}. The subfiles are arranged into $N$ {\em sublibraries}, $L_1,\dots,L_N$, such that $L_\ell$ contains all the subfiles with commonness level of  $\ell$, i.e., 
\small
\begin{equation}
L_\ell=\{\overline{W}_{\mathcal{S}}:\mathcal{S}\subseteq [N],\; |\mathcal{S}|=\ell\}.\notag
\end{equation}
\normalsize
We assume that all the subfiles with the same commonness level, i.e., in the same sublibrary, have the same length, and let subfile $\overline{W}_{\mathcal{S}}\in L_\ell$ be distributed uniformly over the set $[2^{  nR_\ell }]$, where $R_\ell$ is referred to as the rate of subfile $\overline{W}_{\mathcal{S}}$, and $n$ denotes the transmission blocklength, corresponding to $n$ uses of the BC. Let $\mathbf{R} \triangleq (R_1, \ldots, R_N)$. Therefore, all the files are of the same rate of $R$ bits per channel use, given by 
\small
\begin{equation}
R=\sum\limits_{\ell=1}^{N} \binom{N-1}{\ell-1}R_\ell.\notag
\end{equation}
\normalsize
 
 %We assume that subfile $\overline{W}_{\mathcal{S}}\in L_\ell$, $\mathcal{S}\subseteq\{1,\dots,N\}$,  distributes uniformly over $\{1,\dots,\lceil nR_\ell\rceil\}$, if $|\mathcal{S}|=\ell$, i.e., where $R_\ell$ is the rate of subfile $\overline{W}_{\mathcal{S}}$, and $n$ denotes the transmission blocklength, referring to $n$ uses of the BC. Note that the common subfiles shared exclusively by $l$ files are of the same rate $R_\ell$. Let $\mathbf{R} \triangleq (R_1, \ldots, R_N)$. As a result, each file in the library is also of the same rate of $R$ bits per channel use, given by 

Each user is equipped with a cache of size $nM$ bits, where $M$ is called the {\em normalized cache capacity}. Communication takes place in two phases. During the first phase, referred to as the {\em placement phase}, the user caches are filled by the server without the knowledge of user demands. This phase happens during a period of low traffic, and we assume during that phase the channel is noiseless and there are no rate limitations. We consider {\em centralized} caching; that is, the server has the knowledge of the active users in advance, allowing the cache placement to be conducted in a coordinated fashion. At the beginning of the second phase, referred to as the {\em delivery phase}, user $k\in[K]$ requests file $W_{d_k}$ from the library, with $d_k$ uniformly distributed over $[N]$. Let $\mathbf{d}\triangleq (d_1, ..., d_K)$ denote the demand vector. All the requests are satisfied through a Gaussian BC, characterized by a time-invariant channel vector ${\bf h} =(h_1 ,\dots,h_K)$ and additive white Gaussian noise, where $h_k$ denotes the real channel gain between the server and user $k$. The channel gains are fixed, and are known to all the parties. Without loss of generality, we assume $h_1^2 \leq h_2^2 \leq \cdots \leq h_K^2$, such that the users are ordered from the weakest to the strongest. The $i^\text{th}$ channel output at user $k$ is given by
\small
\begin{equation}\label{channel}
Y_{k,i}=h_k\, X_i+\sigma_{k,i}, \notag
\end{equation}
\normalsize
 where $X_i$ and $\sigma_{k,i}\sim \mathcal{N}(0,1)$ denote the channel input and the noise term at user $k$ in the $i^\text{th}$ channel use, respectively, which is independent and identically distributed across time and users. 

For a total transmit power of $P$, an $(n, \mathbf{R}, M, P)$ code for this system consists of:
\begin{itemize}
\item \textbf{$K$ caching functions} $f_{k}$, $k \in [K]$,
\small
\begin{equation}
  f_{k}: [2^{nR}]^N\times \mathbbm{R}^{K} \rightarrow [2^{nMR}],\notag
\end{equation}
\normalsize
such that user $k$'s cache content is given by  $Z_k=f_{k}(\mathbf{W},\mathbf{h})$. Let $\mathbf{Z}\triangleq (Z_1,\dots,Z_K)$.
\item A \textbf{delivery function} $g$, 
\small
\begin{equation}
g: [ 2^{nR}]^N \times [2^{nMR}] \times \mathbbm{R}^{K} \times [N]^K \rightarrow \mathbbm{R}^{n},\notag
\end{equation}
\normalsize
which, for given cache contents $\mathbf Z$, channel gains $\mathbf{h}$, and demand vector $\bf d$, generates the channel input signal,  $X^n(\mathbf{W}, \mathbf Z, \mathbf{d} )=g(\mathbf{W}, \mathbf Z, \mathbf{h}, \mathbf{d})$, transmitted by the server over the Gaussian BC in $n$ channel uses, with $X_i(\mathbf{W}, \mathbf Z, \mathbf{d})$ denoting the $i^\text{th}$ channel input, $i=1, ..., n$. The channel input vector is generated such that its average power over $n$ channel uses is not more than $P$ for any demand vector realization, i.e., 
\small
\begin{equation}
P(\mathbf{W}, \mathbf Z, \mathbf{d}) \triangleq \frac{1}{n} \sum\limits_{i=1}^n X^2_i(\mathbf{W}, \mathbf Z, \mathbf{d})\leq P, \quad \forall\, {\bf d}\in [N]^K.\notag
\end{equation}
\normalsize
\item \textbf{$K$ decoding functions} $\phi_k$, $k\in [K]$,
\begin{equation}
\phi_k: \mathbbm{R}^{n}   \times [2^{nMR}]\times \mathbbm{R}^{K} \times [N]^K \rightarrow [2^{nR}],\notag
\end{equation}
where $\widehat{W}_{d_k}=\phi_k(Y^n(\mathbf{W},\mathbf Z, \mathbf{d}),Z_k,\mathbf{h},\mathbf{d})$, is the reconstruction of $W_{d_k}$ requested by user $k$, and $Y^n(\mathbf{W}, \mathbf Z, \mathbf{d})$ is the channel output at user $k$ for input signal $X^n(\mathbf{W}, \mathbf Z, \mathbf{d})$. 
\end{itemize}

\begin{Definition}
A memory-power pair $(M, P)$ is \textit{achievable} for the system described above, if there exists a sequence of $(n, \mathbf{R}, M, P)$ codes such that %for any demand realization $\mathbf{d} \in [N]^K$, 
\small
\begin{equation}
\lim_{n \rightarrow \infty} \mathbbm{P} \Bigg\{\bigcup\limits_{{\bf d}\in [N]^K } \bigcup\limits_{k =1}^K \Big\{\widehat{W}_{d_k}\neq W_{d_k}\Big\}\Bigg\}=0.\notag
\end{equation}
\end{Definition}
\normalsize
For a system with $N$ files and $K$ users, with given channel gains $\mathbf h$, our goal is to characterize the minimum achievable power $P$ as a function of the user cache capacity $M$, i.e.,
\small
\begin{equation}
    P^*(M)\triangleq \inf\{P: (M, P) \mbox{ is achievable}\}.\notag
\end{equation}
\normalsize

\begin{Remark}
In principle different codebooks satisfying different average power constraints can be used for different demand vectors. With the definition above, our goal is to characterize the power constraint that is required to satisfy any demand combination.
\end{Remark}

We conclude this section with the following proposition, which will be frequently referred to in the remainder of the paper.

 \begin{proposition}\label{prop:AWGN}
 \cite{Bergmans1974,bergmans1973degradedBC}
 In a $K$-user degraded Gaussian BC with \small$h_1^2\leq h_2^2\leq\dots\leq h_K^2$,\normalsize distinct messages at rates $\rho_1,\dots,\rho_K$, can be reliably transmitted to users $1,\dots,K$, respectively, iff 
 \small
 \begin{equation}
\rho_k \leq C\Bigg(\frac{h_k^2P_k}{1+h_k^2\sum\limits_{j=k+1}^{K}P_j}\Bigg),~~~  k =1,\dots,K, \label{eq:codebook}
\end{equation}
\normalsize
where $C(x) \triangleq \frac{1}{2} \log_2(1+x)$. This is achieved by superposition coding with Gaussian codewords of power $P_i$, $i=1, ..., K$, to transmit to user $i$. As a consequence, the minimum total transmit power for reliable communication is given by
\small
\begin{equation}
\sum\limits_{k=1}^K P_k \geq \sum\limits_{k=1}^{K}\left(\frac{2^{2 {\rho}_k}-1}{h_k^2}\right) \prod\limits_{j=1}^{k-1}2^{2 {\rho}_j}.\label{eq:power}
\end{equation}
\normalsize
\end{proposition}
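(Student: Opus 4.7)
This is the classical capacity region of a degraded Gaussian BC, so the plan is to sketch the standard two-step argument (achievability and converse) and then derive the power expression by solving the rate inequalities recursively.

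For achievability I would use superposition coding with Gaussian codebooks. For each $k \in [K]$, generate an i.i.d.\ Gaussian codebook $\{X_k^n(m_k)\}$ of size $2^{n\rho_k}$ with per-symbol variance $P_k$, and transmit $X^n = \sum_{k=1}^K X_k^n(m_k)$, which has average power $\sum_{k=1}^K P_k$. Because $h_1^2 \le \cdots \le h_K^2$, any codeword decodable by a weaker user is decodable by all stronger users, so user $k$ can successively decode $X_1^n, \ldots, X_{k-1}^n$ while treating $X_{k+1}^n, \ldots, X_K^n$ as i.i.d.\ Gaussian noise. This leaves user $k$ with effective SINR $h_k^2 P_k / (1 + h_k^2 \sum_{j=k+1}^K P_j)$ for its own codeword, so the standard Gaussian point-to-point argument yields the stated rate constraint in \eqref{eq:codebook}.

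For the converse I would follow the Bergmans argument. First reduce to a physically degraded BC (without loss of capacity, since the stochastically degraded Gaussian BC has the same capacity region as any physically degraded counterpart), and then combine Fano's inequality with the entropy power inequality applied to $h(Y_k^n \mid W_{k+1}, \ldots, W_K)$. Iterating EPI down the chain of degraded outputs produces, after single-letterization, power variables $P_1, \ldots, P_K$ with $\sum_k P_k$ bounded by the input power that satisfy \eqref{eq:codebook}. I expect the EPI step to be the only real technical obstacle; everything else is standard bookkeeping with auxiliary random variables.

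To obtain \eqref{eq:power} from \eqref{eq:codebook}, let $S_k \triangleq \sum_{j=k}^K P_j$ with $S_{K+1} \triangleq 0$. Rewriting the $k$-th rate constraint gives $2^{2\rho_k}(1 + h_k^2 S_{k+1}) \le 1 + h_k^2 S_k$, which is equivalent to the recursion
\begin{equation}
S_k \;\geq\; \frac{2^{2\rho_k}-1}{h_k^2} + 2^{2\rho_k}\, S_{k+1}. \notag
\end{equation}
Unrolling this recursion from $k=K$ down to $k=1$ yields exactly the right-hand side of \eqref{eq:power} as a lower bound on $S_1 = \sum_{k=1}^K P_k$, and the bound is tight whenever the constraints in \eqref{eq:codebook} hold with equality, so superposition coding achieves it.
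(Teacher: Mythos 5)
Your proposal is correct and follows the standard Bergmans argument (superposition coding with successive decoding for achievability, degradedness plus the entropy power inequality for the converse), which is exactly the approach in the references the paper cites for this proposition; the paper itself offers no proof beyond the citation. Your recursive unrolling of the rate constraints via $S_k \geq \frac{2^{2\rho_k}-1}{h_k^2} + 2^{2\rho_k} S_{k+1}$ to obtain \eqref{eq:power} is also the intended derivation and is carried out correctly.
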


\section{Lower Bound}\label{sec:main results}
This section provides a lower bound on the memory-power function, $P^*(M)$ in Theorem ~\ref{lowerbound}, when the placement phase is limited to caching functions that store uncoded contents. We first present a lemma, which will facilitate the proof of Theorem ~\ref{lowerbound}.
%followed by an upper bound that is achieved by the centralized caching and delivery scheme proposed in Sec.~\ref{sec:scheme sep}, Algorithm~1.

We denote by $\mathfrak{D}_d$ the set of all demand combinations such that the first $N_e$ users request distinct files, where $N_e \triangleq \min\{N, K\}$. We note that there are a total of $\binom{N}{N_e}N_e!N^{K-N_e}$ such demand combinations, i.e, $|\mathfrak{D}_d|=\binom{N}{N_e}N_e!N^{K-N_e}$, enumerated as $\mathbf{d}_t\triangleq (d^t_1, ..., d^t_K) \in \mathfrak{D}_d$,\; $t\in [\binom{N}{N_e}\,N_e!\,N^{K-N_e}]$.

\begin{exmp}\label{exmp1}
Consider $N=3$, $K=4$. We have $|\mathfrak{D}_d|=18$ and 
\small
\begin{align}\label{distinctdemand}
&\mathbf{d}_1=\{1, 2, 3, 1\}, 
\mathbf{d}_2=\{1, 2, 3, 2\}, \mathbf{d}_3=\{1, 2, 3, 3\},\nonumber\\
&\mathbf{d}_4=\{1, 3, 2, 1\}, 
\mathbf{d}_5=\{1, 3, 2, 2\}, \mathbf{d}_6=\{1, 3, 2, 3\},\nonumber\\
&\mathbf{d}_7=\{2, 3, 1, 1\}, 
\mathbf{d}_8=\{2, 3, 1, 2\}, \mathbf{d}_9=\{2, 3, 1, 3\},\nonumber\\
&\mathbf{d}_{10}=\{3, 2, 1, 1\}, 
\mathbf{d}_{11}=\{3, 2, 1, 2\}, \mathbf{d}_{12}=\{3, 2, 1, 3\},\nonumber\\
&\mathbf{d}_{13}=\{3, 1, 2, 1\}, 
\mathbf{d}_{14}=\{3, 1, 2, 2\}, \mathbf{d}_{15}=\{3, 1, 2, 3\},\nonumber\\
&\mathbf{d}_{16}=\{2, 1, 3, 1\}, 
\mathbf{d}_{17}=\{2, 1, 3, 2\}, \mathbf{d}_{18}=\{2, 1, 3, 3\}.
\end{align}
\normalsize
\end{exmp}

\begin{Lemma}
There exist random variables $X_{\mathbf{d}_t},$ $Y_{1, \mathbf{d}_t},$ $...,$ $ Y_{N_e, \mathbf{d}_t}$, where for $X_{\mathbf{d}_t}=x$, $x\in \mathbbm{R}$,  
\begin{equation}
Y_{k, \mathbf{d}_t}|x\sim N(h_kx, 1),~~k\in [N_e] \notag \end{equation}
and random variables $U_{1, \mathbf{d}_t},$ $...,$ $ U_{N_e-1, \mathbf{d}_t}$, such that  
\begin{equation}
U_{1,\mathbf{d}_t}-\cdots-U_{N_e-1, \mathbf{d}_t}-X_{\mathbf{d}_t}-Y_{N_e, \mathbf{d}_t}-\cdots-Y_{1, \mathbf{d}_t}\notag
\end{equation}
forms a Markov chain, and 
\small
\begin{align}\label{lemma1}
&H(W_{d^t_1})+\epsilon_n\leq \frac{1}{n} I(W_{d^t_1}; Z_1)+I(U_{1, \mathbf{d}_t}; Y_{1, \mathbf{d}_t});\notag\\
&H(W_{d^t_k}|W_{d^t_{k-1}}, ..., W_{d^t_1})+\epsilon_n\leq \frac{1}{n} I(W_{d^t_k}; Z_1, ..., Z_k|W_{d^t_{k-1}}, ..., W_{d^t_1})+I(U_{k,\mathbf{d}_t}; Y_{k,\mathbf{d}_t}|U_{k-1,\mathbf{d}_t}), k\in [2:N_e-1];\notag\\
&H(W_{d^t_{N_e}}|W_{d^t_{N_e-1}}, ..., W_{d^t_1})+\epsilon_n\leq \frac{1}{n} I(W_{d^t_{N_e}}; Z_1, ..., Z_{N_e}|W_{d^t_{N_e-1}}, ..., W_{d^t_1})+I(X_{\mathbf{d}_t}; Y_{N_e, \mathbf{d}_t}|U_{N_e-1, \mathbf{d}_t}),
\end{align}
\normalsize
where $\epsilon_n$ goes to zero as $n \rightarrow \infty$.
\begin{proof}
The proof is similar to the proof of \cite[Lemma 14]{shirin2017broadcastit}, which we omit here. 
\end{proof}
\end{Lemma}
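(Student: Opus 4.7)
My plan is to combine Fano's inequality applied at each of the first $N_e$ users with a Bergmans-style converse for the degraded Gaussian BC, treating the caches $Z_1,\dots,Z_{N_e}$ as genie-provided side information available at the stronger users. Fix a distinct-demand vector $\mathbf{d}_t\in\mathfrak{D}_d$. Since user $k$ reliably reconstructs $W_{d_k^t}$ from $(Y_k^n,Z_k,\mathbf{h},\mathbf{d})$, Fano's inequality gives $H(W_{d_k^t}\mid Y_k^n,Z_k)\le n\delta_{n,k}$ with $\delta_{n,k}\to 0$, and by degradedness granting the weaker users' caches $Z_1,\dots,Z_{k-1}$ to user $k$ keeps this conditional entropy small.

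First I would expand, via Fano and the chain rule of mutual information, the conditional file entropy as
\begin{equation}
H(W_{d_k^t}\mid W_{d_{k-1}^t},\dots,W_{d_1^t}) \le I(W_{d_k^t};Z_1,\dots,Z_k\mid W_{d_{k-1}^t},\dots,W_{d_1^t}) + I(W_{d_k^t};Y_k^n\mid Z_1,\dots,Z_k,W_{d_{k-1}^t},\dots,W_{d_1^t}) + n\epsilon_n.\notag
\end{equation}
The first summand on the right already matches the cache term in the Lemma up to the factor $\frac{1}{n}$; the remaining task is to reduce the channel mutual information to the claimed single-letter form.

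Second, I would invoke the Csisz\'ar sum identity with the canonical auxiliary choice $U_{k,i}\triangleq(W_{d_1^t},\dots,W_{d_k^t},Z_1,\dots,Z_k,Y_k^{i-1})$ for $k\in[1:N_e-1]$, treating $X_i$ as the $U_{N_e,i}$ placeholder for the last user. Physical degradedness of the Gaussian BC, which may be assumed WLOG since only the marginals enter the problem, yields the per-letter Markov chain
\begin{equation}
U_{1,i}-\cdots-U_{N_e-1,i}-X_i-Y_{N_e,i}-\cdots-Y_{1,i}.\notag
\end{equation}
Telescoping over $k$ and summing over $i$ then writes $\frac{1}{n}I(W_{d_k^t};Y_k^n\mid Z_1,\dots,Z_k,W_{d_{k-1}^t},\dots,W_{d_1^t})$ as $\frac{1}{n}\sum_{i=1}^n I(U_{k,i};Y_{k,i}\mid U_{k-1,i})$, with the boundary case $k=N_e$ producing $\frac{1}{n}\sum_i I(X_i;Y_{N_e,i}\mid U_{N_e-1,i})$. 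Introducing a time-sharing variable $Q$ uniform on $[n]$ and independent of everything else, and setting $U_{k,\mathbf{d}_t}\triangleq(U_{k,Q},Q)$, $X_{\mathbf{d}_t}\triangleq X_Q$, and $Y_{k,\mathbf{d}_t}\triangleq Y_{k,Q}$, collapses these sums to the single-letter mutual informations stated in the Lemma while preserving the Markov structure; the i.i.d.\ Gaussian noise assumption gives $Y_{k,\mathbf{d}_t}\mid X_{\mathbf{d}_t}=x\sim\mathcal{N}(h_kx,1)$.

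The main obstacle is verifying the auxiliary Markov chain cleanly. The BC-side portion $X_i-Y_{N_e,i}-\cdots-Y_{1,i}$ is immediate from physical degradedness, but the auxiliary-side portion requires checking that each $U_{k,i}$ is a deterministic function of $U_{k+1,i}$, which follows because the caches are deterministic functions of $\mathbf{W}$ and $Y_k^{i-1}$ is itself a degraded version of $Y_{k+1}^{i-1}$ in the physically-degraded realization. This bookkeeping is exactly what is carried out in the proof of Lemma~14 of \cite{shirin2017broadcastit}; the only adaptation here is the nested conditioning on the previously-requested files $W_{d_{k-1}^t},\dots,W_{d_1^t}$ rather than on the full library, which does not alter the argument.
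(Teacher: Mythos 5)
Your proposal is correct and takes essentially the same route as the paper, which in fact gives no proof at all and simply defers to Lemma~14 of the cited reference --- the standard Fano-plus-degraded-BC-converse argument with auxiliary identification and time-sharing that you sketch. The one point needing care is your auxiliary choice $U_{k,i}=(W_{d^t_1},\dots,W_{d^t_k},Z_1,\dots,Z_k,Y_k^{i-1})$: $U_{k,i}$ is not a \emph{deterministic} function of $U_{k+1,i}$ (since $Y_k^{i-1}$ is only a stochastically/physically degraded version of $Y_{k+1}^{i-1}$), and to make both the per-letter Markov chain and the single-letterization of the level-$(k+1)$ bound go through one should condition on the past output of a common (strongest) receiver and invoke physical degradedness to discard the weaker histories --- routine bookkeeping that you correctly flag as the main thing to verify.
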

%\subsection{Lower Bound on $P^*(M)$}
\begin{Theorem}\label{lowerbound}
For the caching problem described in Section~\ref{sys} with uncoded cache placement phase, the optimal memory-power function, $P^*(M)$, is lower bounded as
\small
\begin{align}
P^*(M)\geq P_{LB}(M)\triangleq\sum\limits_{k=1}^{\min\{N, K\}}\left(\frac{2^{2\tilde{\rho}_k}-1}{h_k^2}\right) \prod\limits_{j=1}^{k-1}2^{2\tilde{\rho}_j},\label{pm:2}
\end{align}
\normalsize
\small
\begin{align}\label{rstar}
\tilde{\rho}_k\triangleq \max\left\{\sum\limits_{\ell=0}^{N-k}\binom{N-k}{\ell}R_{\ell+1}-M, \;0\right\},  \forall\, k \in[K].
\end{align}
\normalsize
\begin{proof}
%The proof can be found in Appendix A.

For any demand vector $\mathbf{d}_t\triangleq (d^t_1, ..., d^t_K) \in \mathfrak{D}_d$, we have $H(W_{d^t_1})=R$, and 
\small
\begin{subequations}
\begin{align}
H(W_{d^t_k}|W_{d^t_{k-1}}, ..., W_{d^t_1})&=H\left(\bigcup\limits_{\substack{\mathcal{S}\subseteq [N]\\  \mathcal{S}\ni d^t_k}}\overline{W}_{\mathcal{S}}|\bigcup\limits_{\substack{\mathcal{S}\subset [N]\\ \{d^t_{k-1}, ..., d^t_1\}\cap\mathcal{S}\neq \emptyset}}\overline{W}_{\mathcal{S}}\right)\label{lemma1:a}\\
&=H\left(\bigcup\limits_{\substack{\mathcal{S}\subseteq [N]\setminus\{d^t_{k-1}, ..., d^t_1\}\\\mathcal{S}\ni d^t_k}}\overline{W}_{\mathcal{S}}\right)=\sum\limits_{\ell=0}^{N-k}\binom{N-k}{\ell}R_\ell, \qquad k \in[2, ..., N_e],\label{lemma1:c}
\end{align}
\end{subequations}
\normalsize
where \eqref{lemma1:a} follows from the fact that $W_i=\bigcup_{\substack{\mathcal{S}\subseteq [N]\\ \mathcal{S} \ni i}}\overline{W}_{\mathcal{S}}$, $\forall i\in [N]$, and  \eqref{lemma1:c} follows due to the independence of the subfiles. Similarly, we have \small $I(W_{d^l_1}; Z_1)=I\left(\bigcup\nolimits_{\substack{\mathcal{S}\subseteq [N]\\ \mathcal{S} \ni d^1_k }}\overline{W}_{\mathcal{S}}; Z_1\right)$,\normalsize and 
\small
\begin{subequations}
\begin{align}
I(W_{d^t_k}; Z_1, ..., Z_k|W_{d^t_{k-1}}, ..., W_{d^t_1})
&=I\left(\bigcup\limits_{\substack{\mathcal{S}\subseteq [N]\\ \mathcal{S}\ni d^t_k}}\overline{W}_{\mathcal{S}}; Z_1, ..., Z_k\bigg|\bigcup\limits_{\substack{\mathcal{S}\subseteq [N]\\ \{d^t_{k-1}, ..., d^t_1\}\cap\mathcal{S}\neq \emptyset}}\overline{W}_{\mathcal{S}}\right)\label{lemma1:d}\\
&\leq I\left(\bigcup\limits_{\substack{\mathcal{S}\subseteq [N]\setminus\{d^t_{k-1}, ..., d^t_1\}\\\mathcal{S}\ni d^t_k}}\overline{W}_{\mathcal{S}}; Z_1, ..., Z_k\right),\label{lemma1:e}
\end{align}
\end{subequations}
\normalsize
 for $k\in [2: N_e]$, where \eqref{lemma1:e} follows due to the independence of the subfiles and uncoded cache placement. Thus, for $n$ sufficiently large, we can rewrite \eqref{lemma1} as 
\small
\begin{align}\label{lemma11}
&R\leq \frac{1}{n} I\left(\bigcup\limits_{\substack{\mathcal{S}\subseteq [N]\\ \mathcal{S}\ni d^1_k}}\overline{W}_{\mathcal{S}}; Z_1\right)+I\left(U_{1, \mathbf{d}_t}; Y_{1, \mathbf{d}_t}\right);\notag\\
&\sum\limits_{\ell=0}^{N-k}\binom{N-k}{l}R_{\ell+1}\leq \frac{1}{n} I\left(\bigcup\limits_{\substack{\mathcal{S}\subseteq [N]\setminus\{d^t_{k-1}, ..., d^t_1\}\\ \mathcal{S}\ni d^t_k }}\overline{W}_{\mathcal{S}}; Z_1, ..., Z_k\right)+I(U_{k, \mathbf{d}_t}; Y_{k, \mathbf{d}_t}|U_{k-1, \mathbf{d}_t}), k\in [2:N_e-1];\notag\\
&\sum\limits_{l=0}^{N-N_e}\binom{N-N_e}{\ell}R_{\ell+1}\leq \frac{1}{n} I\left(\bigcup\limits_{\substack{\mathcal{S}\subseteq [N]\setminus\{d^t_{N_e-1}, ..., d^t_1\}\\\mathcal{S}\ni d^t_{N_e}}}\overline{W}_{\mathcal{S}}; Z_1, ..., Z_{N_e}\right)+I(X_{\mathbf{d}_t}; Y_{N_e, \mathbf{d}_t}|U_{N_e-1, \mathbf{d}_t}).
\end{align}
\normalsize
For degraded Gaussian BC described in Section~\ref{sys}, we have \cite{bergmans1973degradedBC}
\small
\begin{align}
I(U_{k, \mathbf{d}_t}; Y_{k, \mathbf{d}_t}|U_{k-1, \mathbf{d}_t})\leq \frac{1}{2}\log_{2}\left(1+\frac{h_k^2P_k(\mathbf{d}_t)}{h_k^2\sum\limits_{j=k+1}^{N_e}P_j(\mathbf{d}_t)+1}\right),\notag
\end{align}
\normalsize
for $k=1, ..., N_e$, where we set $U_{0, \mathbf{d}_t}\triangleq 0$ and $U_{N_e, \mathbf{d}_t}\triangleq X_{\mathbf{d}_t}$. Thus, with \eqref{lemma11} and according to Proposition \ref{prop:AWGN}, the required average transmission power to satisfy any demand vector $\mathbf{d}_t \in \mathfrak{D}_d$ is lower bounded by 
\small
\begin{align}
P(\mathbf{d}_t) \geq \sum\limits_{k=1}^{N_e}P_k(\mathbf{d}_t)=q(c_1(\mathbf{d}_t), ..., c_{N_e}(\mathbf{d}_t)),  \notag
\end{align}
\normalsize
where 
\small
\begin{subequations}\label{lowerP}
\begin{align}
&q(c_1(\mathbf{d}_t), ..., c_{N_e}(\mathbf{d}_t))\triangleq \sum\limits_{k=1}^{N_e}\left(\frac{2^{2c_k(\mathbf{d}_t)-1}}{h_k^2}\right) \prod\limits_{j=1}^{k-1}2^{2c_j(\mathbf{d}_t)};\label{defi:q}\\
&c_k(\mathbf{d}_t)\triangleq \sum\limits_{\ell=0}^{N-k}\binom{N-k}{\ell}R_{\ell+1}-I\left(\bigcup\limits_{\substack{\mathcal{S}\subseteq [N]\setminus\{d^t_{k-1}, ..., d^t_1\}\\  \mathcal{S}\ni d^t_k }}\overline{W}_{\mathcal{S}}; Z_1, ..., Z_k\right), k\in[N_e].\label{defi:ck}
\end{align}
\end{subequations}
\normalsize
It is proved in \cite[Appendix B]{amiri2017gaussian} that $q(\cdot)$ is a convex function of $\left(C_1(\mathbf{d}_t), ..., C_{N_e}(\mathbf{d}_t)\right)$. Thus, the optimal achievable power is lower bounded by
\small
\begin{subequations}
\begin{align}
P^*(M)\geq &\frac{1}{|\mathfrak{D}_d|} \sum\limits_{t=1}^{|\mathfrak{D}_d|} P(\mathbf{d}_t)
\geq\frac{1}{|\mathfrak{D}_d|}\sum\limits_{t=1}^{|\mathfrak{D}_d|} q(c_1(\mathbf{d}_t), ..., c_{N_e}(\mathbf{d}_t))
\\\geq &q\left(\frac{1}{|\mathfrak{D}_d|}\sum\limits_{t=1}^{|\mathfrak{D}_d|}c_1(\mathbf{d}_t), ..., \frac{1}{|\mathfrak{D}_d|}\sum\limits_{t=1}^{|\mathfrak{D}_d|}c_{N_e}(\mathbf{d}_t)\right)
\geq\sum\limits_{k=1}^{N_e}\left(\frac{2^{2\tilde{\rho}_k}-1}{h_k^2}\right) \prod\limits_{j=1}^{k-1}2^{2\tilde{\rho}_j},\label{pm:1}
\end{align}
\end{subequations}
\normalsize
where we recall that  \small\[\tilde{\rho}_k\triangleq \max\left\{\sum\limits_{\ell=0}^{N-k}\binom{N-k}{\ell}R_{\ell+1}-M, 0\right\}. \]\normalsize\eqref{pm:1} follows from the convexity of $q(\cdot)$, and \eqref{pm:1} holds since $\frac{1}{|\mathfrak{D}_d|}\sum\limits_{t=1}^{|\mathfrak{D}_d|}c_k(\mathbf{d}_t)\geq \tilde{\rho}_k$, $\forall k \in [N_e]$, which we will prove in the following. For any $k \in [N_e]$, we divide all the demands $\mathbf{d}_t \in \mathfrak{D}_d$ into $|\mathfrak{D}_d|/k$ disjoint groups, where each group has $k$ demand vectors such that $d^{t_1}_k \in \{d^{t_2}_1, ..., d^{t_2}_{k-1}\}$, and $d^{t_2}_k \in \{d^{t_1}_1, ..., d^{t_1}_{k-1}\}$, if demand vectors $\mathbf{d}_{t_1}$ and $\mathbf{d}_{t_2}$ are in the same group, and $t_1\neq t_2$.

In Example \ref{exmp1}, there are $18$ demand vectors in $\mathfrak{D}_d$ listed in \eqref{distinctdemand}. For $k=3$, one partition that meets the above condition is 
\small
\begin{align}
&G_1=\{\mathbf{d}_1, \mathbf{d}_4, \mathbf{d}_7\}, G_2=\{\mathbf{d}_2, \mathbf{d}_5, \mathbf{d}_8\}, G_3=\{\mathbf{d}_3, \mathbf{d}_6, \mathbf{d}_9\}, \nonumber\\
&G_4=\{\mathbf{d}_{10}, \mathbf{d}_{13}, \mathbf{d}_{16}\}, G_5=\{\mathbf{d}_{11}, \mathbf{d}_{14}, \mathbf{d}_{17}\}, G_6=\{\mathbf{d}_{12}, \mathbf{d}_{15}, \mathbf{d}_{18}\},\notag
\end{align}
\normalsize
where $G_j$, $j\in [6]$, denotes one group that satisfies $d^{t_1}_3 \in \{d^{t_2}_1, d^{t_2}_{2}\}$, and $d^{t_2}_3 \in \{d^{t_1}_1, d^{t_1}_{2}\}$,   $\forall \mathbf{d}_{t_1}, \mathbf{d}_{t_2} \in G_j, t_1\neq t_2$.  

We denote the index of the $s^\text{th}$ demand vector in the $j^\text{th}$ group by $t_{js}$. Thus, 
\small
\begin{subequations}
\begin{align}
\frac{1}{|\mathfrak{D}_d|} \sum\limits_{t=1}^{|\mathfrak{D}_d|}I&\left(\bigcup\limits_{\substack{\mathcal{S}\subseteq [N]\setminus\{d^t_{k-1}, ..., d^t_1\}\\ \mathcal{S}\ni d^t_k }}\overline{W}_{\mathcal{S}}; Z_1, ..., Z_k\right)= \frac{1}{|\mathfrak{D}_d|} \sum\limits_{j=1}^{|\mathfrak{D}_d|/k}\sum\limits_{s=1}^{k}I\left(\bigcup\limits_{\substack{\mathcal{S}\subseteq [N]\setminus\{d^{t_{js}}_{k-1}, ..., d^{t_{js}}_1\}\\ \mathcal{S}\ni d^{t_{js}}_k  }}\overline{W}_{\mathcal{S}}; Z_1, ..., Z_k\right)\label{mutual:1}\\
&=\frac{1}{|\mathfrak{D}_d|} \sum\limits_{j=1}^{|\mathfrak{D}_d|/k}I\left(\bigcup\limits_{s\in[k]}\bigcup\limits_{\substack{\mathcal{S}\subseteq [N]\setminus\{d^{t_{js}}_{k-1}, ..., d^{t_{js}}_1\}\\ \mathcal{S}\ni d^{t_{js}}_k }}\overline{W}_{\mathcal{S}}; Z_1, ..., Z_k\right)\label{mutual:2}\\
&\leq \frac{1}{|\mathfrak{D}_d|} \sum\limits_{j=1}^{|\mathfrak{D}_d|/k}\min\left\{H\left(\bigcup\limits_{s\in[k]}\bigcup\limits_{\substack{\mathcal{S}\subset [N]\setminus\{d^{t_{js}}_{k-1}, ..., d^{t_{js}}_1\}\\ d^{t_{js}}_k \in \mathcal{S}}}\overline{W}_{\mathcal{S}}\right), H( Z_1, ..., Z_k) \right\}\label{mutual:3}\\
&=\min\left\{\sum\limits_{\ell=0}^{N-k}\binom{N-k}{\ell}R_{\ell+1}, M \right\},\label{mutual:4}
\end{align}
\end{subequations}
\normalsize
where \eqref{mutual:1} is derived by writing the summation with regards to the groups; \eqref{mutual:2} follows the independence of subfiles and the fact that  
\small
\[\bigcap\limits_{s\in[k]}\left(\bigcup\limits_{\substack{\mathcal{S}\subset [N]\setminus\{d^{t_{js}}_{k-1}, ..., d^{t_{js}}_1\}\\ d^{t_{js}}_k \in \mathcal{S}}}\overline{W}_{\mathcal{S}}\right)=\emptyset,\]
\normalsize
since $d^{t_{js_1}}_k \in \{d^{t_{js_2}}_1, ..., d^{t_{js_2}}_{k-1}\}$, while $d^{t_{js_2}}_k \in \{d^{t_{js_1}}_1, ..., d^{t_{js_1}}_{k-1}\}$, if $s_1 \neq s_2$, $\forall s_1, s_2 \in [k], j\in [|\mathfrak{D}_d|/k]$; \eqref{mutual:3} follows since mutual information is no larger than the entropy of each component. \eqref{mutual:4} follows from the size of the subfiles and the cache capacity. Substituting \eqref{defi:ck} and \eqref{mutual:4} into \eqref{pm:1}, we have proven \eqref{pm:2}. Thus, the proof of Theorem~\ref{lowerbound} is completed. 
\end{proof}
\end{Theorem}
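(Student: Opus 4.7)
The plan is to reduce the problem to the classical degraded Gaussian BC lower bound (Proposition~\ref{prop:AWGN}) applied to the effective rates that remain after subtracting what the caches can provide. The first step is to restrict attention to demand vectors $\mathbf{d}_t\in\mathfrak{D}_d$ in which the first $N_e=\min\{N,K\}$ users request distinct files, since these are the demands that most strain the channel. For each such $\mathbf{d}_t$, I would apply Lemma~1 to convert the source-coding side (the file entropies and conditional file entropies conditioned on previously-decoded files) into bounds on cache mutual information plus channel mutual information, with the latter constrained by the degraded Markov chain $U_1-\cdots-U_{N_e-1}-X-Y_{N_e}-\cdots-Y_1$.

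The second step is to exploit the correlation structure $W_i=\bigcup_{\mathcal{S}\ni i}\overline{W}_{\mathcal{S}}$. Because the subfiles $\overline{W}_{\mathcal{S}}$ are mutually independent, the conditional entropy $H(W_{d^t_k}\mid W_{d^t_{k-1}},\dots,W_{d^t_1})$ collapses to the total rate of subfiles $\overline{W}_{\mathcal{S}}$ with $d^t_k\in\mathcal{S}$ and $\mathcal{S}\cap\{d^t_1,\dots,d^t_{k-1}\}=\emptyset$, namely $\sum_{\ell=0}^{N-k}\binom{N-k}{\ell}R_{\ell+1}$. Similarly, \emph{uncoded} placement forces the conditional mutual information $I(W_{d^t_k};Z_1,\dots,Z_k\mid W_{d^t_{k-1}},\dots,W_{d^t_1})$ to be bounded by the entropy of those very same ``new'' subfiles. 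Combining these, each $\mathbf{d}_t$ produces a lower bound on the channel mutual information at layer $k$, whose magnitude is the difference $c_k(\mathbf{d}_t)$ between the source rate and the cache contribution. Proposition~\ref{prop:AWGN} then turns this into $P(\mathbf{d}_t)\geq q(c_1(\mathbf{d}_t),\dots,c_{N_e}(\mathbf{d}_t))$ for the convex function $q(\cdot)$ in \eqref{defi:q}.

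The third step is to strip the bound of its dependence on $\mathbf{d}_t$ by averaging uniformly over $\mathfrak{D}_d$ and pushing the average inside $q(\cdot)$ via Jensen's inequality, which I can do because $q$ is convex (shown in \cite[Appendix B]{amiri2017gaussian}). It remains to show that the per-layer average $\frac{1}{|\mathfrak{D}_d|}\sum_t c_k(\mathbf{d}_t)\geq \tilde{\rho}_k$, i.e., the cache term never contributes more than the total cache capacity $M$ on average.

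The main obstacle, and the key combinatorial idea, is proving this last inequality. My plan is to partition $\mathfrak{D}_d$ into $|\mathfrak{D}_d|/k$ groups of $k$ demand vectors each, chosen so that within a group the $k$-th requested files and the first $k-1$ requested files are cyclically permuted among each other; concretely, for any two distinct $\mathbf{d}_{t_1},\mathbf{d}_{t_2}$ in the same group, $d^{t_1}_k\in\{d^{t_2}_1,\dots,d^{t_2}_{k-1}\}$ and vice versa. Under such a partition, the $k$ sets of ``new'' subfiles associated with the $k$ members of a group are pairwise disjoint (since each one excludes the $k-1$ previously-requested files of its own vector, and that list contains all the other group members' $k$-th requested files). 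This disjointness lets me combine the $k$ mutual informations into a single mutual information on a \emph{disjoint} union of subfile collections, which is then bounded both by the entropy of that union and by $H(Z_1,\dots,Z_k)\leq kM$. Taking the minimum and dividing by $k$ yields the $\min\{\sum_{\ell=0}^{N-k}\binom{N-k}{\ell}R_{\ell+1},M\}$ cap on the averaged cache contribution, which is precisely what is needed to conclude $\frac{1}{|\mathfrak{D}_d|}\sum_t c_k(\mathbf{d}_t)\geq \tilde{\rho}_k$. The remaining work is a verification that such a partition actually exists for every $k\in[N_e]$, which I would demonstrate by an explicit construction (similar in spirit to Example~\ref{exmp1}) and a counting argument.
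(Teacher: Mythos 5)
Your proposal follows the paper's proof essentially step for step: restriction to $\mathfrak{D}_d$, Lemma~1 combined with the subfile-independence computation of the conditional entropies and the uncoded-placement bound on the cache terms, Proposition~\ref{prop:AWGN} and convexity of $q(\cdot)$ via Jensen, and the same key combinatorial partition of $\mathfrak{D}_d$ into groups of $k$ demand vectors whose ``new-subfile'' sets are pairwise disjoint, capped by $\min\{H(\cdot),H(Z_1,\dots,Z_k)\}$. This is the paper's argument; your added remark that the existence of the partition for every $k\in[N_e]$ still requires an explicit construction is a fair point, since the paper itself only illustrates it by example.
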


%\subsection{Upper Bound on $P^*(M)$}
%An upper bound on the transmission power required for any demand realization is given in the next theorem. 

\section{Cache-Aided Superposition Coding} \label{sec:scheme sep}
We propose a centralized caching and delivery scheme, which  employs  superposition  coding to deliver  coded  messages  over  the  Gaussian  BC \cite{bergmans1973degradedBC,Bergmans1974}, where  the  coded  messages are generated  taking into account the correlation among the requested files as well as the channel gains. As in \cite{qiandeniz2018icc,hassanzadeh2017rate,hassanzadeh2017broadcast}, the scheme operates by treating the sublibraries independently during the placement and delivery phases to determine the cache content and messages  targeted at each user, which are then jointly delivered over the BC. For clarity, the scheme is first explained on a simple example. 
%In the following, we first illustrate the main idea through a simple example, and then provide the general description of the coding  scheme.

\textbf{Example 2.} Consider $K=3$ users with channel gains $h_1^2\leq h_2^2\leq h_3^2$, and a database of $N=3$ files as in Fig.~\ref{fig:correlationmodel} with sublibraries: 
\begin{itemize}
\item $L_1 = \{\overline{W}_{\{1\}}, \overline{W}_{\{2\}},\overline{W}_{\{3\}}   \}$, each with rate $R_1$.
\item $L_2 = \{\overline{W}_{\{1,2\}}, \overline{W}_{\{2,3\}},\overline{W}_{\{1,3\}} \} $, each with rate $R_2$.
\item $L_3 = \{\overline{W}_{\{1,2,3\}}\}$, with rate $R_3$.
\end{itemize}
Assume that each user has a normalized cache  capacity of $M = R_1+R_2+\frac{1}{3}R_3$. 

$\circ$ \textbf{Placement Phase:}
Placement is carried out independently across sublibraries. Assume that each user divides its cache into three portions with normalized capacities $R_1$, $R_2$, and $\frac{1}{3}R_3$, allocated for files from sublibraries $L_1$, $L_2$ and $L_3$, respectively. We remark that this cache capacity allocation is not optimal, and the proposed scheme further optimizes the allocation as described in Sec.~\ref{sec:scheme general}. 
%The sublibraries are treated independently during caching and delivery phase. However, the codewords that are delivered are jointly transmitted over the Gaussian BC and the cache allocations are optimized.
%Suppose that cache capacity is split among sublibraries such that $\pi_1M = R_1$, $\pi_2M = R_2$, and $\pi_3M = \frac{1}{3}R_3$.
We use the prefetching policy proposed in~\cite{yu2017exact}, which divides the subfiles in sublibrary $L_\ell$ into three non-overlapping parts, each of size $\frac{1}{3}nR_\ell$ bits. Then, user $k$ caches 
\begin{align}
Z_k=\Big\{\overline{W}_{\{1\}, \{k\}},& \overline{W}_{\{2\}, \{k\}}, \overline{W}_{\{3\}, \{k\}}, \overline{W}_{\{1, 2\}, \{k\}}, \overline{W}_{\{2, 3\}, \{k\}}, \overline{W}_{\{1, 3\}, \{k\}}, \overline{W}_{\{1, 2, 3\}, \{k\}}\Big\}, \notag
\end{align}
where $\overline{W}_{\mathcal{S}, \{k\}}$ denotes the $k^\text{th}$ part of subfile $\overline{W}_{\mathcal{S}}$ cached at user $k\in [3]$.
 
 $\circ$ \textbf{Delivery Phase:}
Once the demand vector is revealed, the server computes the messages intended for each user, independently for each sublibrary, and delivers them over the BC via superposition coding with Gaussian codewords. The total transmit power is given in Proposition \ref{prop:AWGN}, which depends on the rate of messages intended for each user. Consider the demand vector $\mathbf{d}=(1, 2, 3)$. User 1, the weakest user, needs subfiles $\{\overline{W}_{\{1\}},\overline{W}_{\{1,2\}},\overline{W}_{\{1,3\}},\overline{W}_{\{1,2,3\}}\}$ to reconstruct $W_1$. User 2 requires the four subfiles corresponding to file $W_2$, but having a better channel than user 1. It can also decode the messages targeted at user 1. Similarly, user 3 can decode the messages indented for both of the weaker users. User messages from each sublibrary are determined as follows.
 
 \begin{itemize}
 \item Sublibrary $L_1$: Based on the demand vector, all subfiles in $L_1$ are required by the users. User 1 needs to receive $\overline{W}_{\{1\}, \{2\}}$ and $\overline{W}_{\{1\}, \{3\}}$, whose targeted message, denoted by $V_{1,\mathbf{d}}(L_1)$, is generated as follows:
 \begin{align}
 V_{1,\mathbf{d}}(L_1)= \left\{
 \overline{W}_{\{1\}, \{2\}}\oplus\overline{W}_{\{2\}, \{1\}}, \overline{W}_{\{1\}, \{3\}}\oplus\overline{W}_{\{3\}, \{1\}} \right\}.
 \label{eq:V1 L1}
 \end{align}
 Since user 2 is able to decode its required part $\overline{W}_{\{2\}, \{1\}}$ from message $V_{1,\mathbf{d}}(L_1)$, it only needs $\overline{W}_{\{2\}, \{3\}}$, which is recovered through the message
 \begin{align}
 V_{2,\mathbf{d}}(L_1)=\left\{\overline{W}_{\{2\}, \{3\}}\oplus\overline{W}_{\{3\}, \{2\}} \right\}.\label{eq:V2 L1}
 \end{align}
 User 3 can decode its missing parts from $V_{1,\mathbf{d}}(L_1)$ and $V_{2,\mathbf{d}}(L_1)$, and therefore, $V_{3,\mathbf{d}}(L_1)=\emptyset$. We note that, while the generation of the coded messages for sublibrary $L_1$ follows similarly to generic coded caching models with a shared common link, we assign them to  users starting from the one with the worst channel gain, as the stronger users automatically decode messages destined for  weaker users.

 \item Sublibrary $L_2$: Each user requires two subfiles from $L_2$, which can be considered as two separate demands. We can group these demands into two, with only one demand per user in each group, and deliver the demands within each group separately. One possible grouping of $L_2$ could be $\mathfrak{S}_1 = (\{1,2\}, \{1,2\}, \{1,3\})$ and $\mathfrak{S}_2=(\{1,3\}, \{2,3\}, \{2,3\})$, where $\mathfrak{S}_1$ corresponds to users 1, 2 and 3 requesting subfiles $\overline{W}_{\{1,2\}}$, $\overline{W}_{\{1,2\}}$ and $\overline{W}_{\{1,3\}}$, respectively. 
%denotes the demand combination where user 1, user 2 and user 3 demands subfile $\overline{W}_{\{1,2\}}$, $\overline{W}_{\{1,2\}}$ and $\overline{W}_{\{1,3\}}$, ($\overline{W}_{\{1,3\}}$, $\overline{W}_{\{2,3\}}$ and $\overline{W}_{\{2,3\}}$) respectively. 
 Then $V_{k,\mathbf{d}}(L_2)=\{v_{k}^1 , v_{k}^2$\}, where $v_{k}^i$ is user $k$'s message corresponding to group $\mathfrak{S}_i$, $i=1,2$. Then, for $\mathfrak{S}_1$ we have
  \begin{align}
 &v^1_1= 
 \{\overline{W}_{\{1,2\}, \{2\}}\oplus\overline{W}_{\{1,2\}, \{1\}},
 \overline{W}_{\{1,2\}, \{3\}}\oplus\overline{W}_{\{1,3\}, \{1\}}\},\label{eq:V1 L2 a}\\
 &v^1_{2}=\left\{\overline{W}_{\{1,3\}, \{2\}}\oplus\overline{W}_{\{1,2\}, \{3\}} \right\},\label{eq:V2 L2 a}\\
 & v^1_{3}=\emptyset,\label{eq:V2 L2 c}
 \end{align}
 and for $\mathfrak{S}_2$
 \begin{align}
  &v^2_1=
 \{\overline{W}_{\{1,3\}, \{2\}}\oplus\overline{W}_{\{2,3\}, \{1\}},
 \overline{W}_{\{1,3\}, \{3\}}\oplus\overline{W}_{\{2,3\}, \{1\}}\},\label{eq:V1 L2 b}\\
 &v^2_{2}=\left\{\overline{W}_{\{2,3\}, \{2\}}\oplus\overline{W}_{\{2,3\}, \{3\}} \right\},\label{eq:V2 L2 b}\\
 & v^2_{3}=\emptyset.\label{2eq: V2 L2 b}
 \end{align}
 \item Sublibrary $L_3$: All users require $\overline{W}_{\{1,2,3\}}$, and therefore
  \begin{align}
 &V_{1,\mathbf{d}}(L_3)=
  \{\overline{W}_{\{1,2,3\}, \{2\}}\oplus\overline{W}_{\{1,2,3\}, \{1\}}, \overline{W}_{\{1,2,3\}, \{3\}}\oplus\overline{W}_{\{1,2,3\}, \{1\}} \},\label{eq:V1 L3}\\
 &V_{2,\mathbf{d}}(L_2)=  V_{3,\mathbf{d}}(L_2)=\emptyset. \label{eq:V2 L3}
 \end{align}
 \end{itemize}
 The messages in \eqref{eq:V1 L1}, \eqref{eq:V1 L2 a}, \eqref{eq:V1 L2 b} and \eqref{eq:V1 L3} constitute all the messages targeted for user 1, with total rate $\rho_1=2(R_1+2R_2+R_3)$. Messages \eqref{eq:V2 L1}, \eqref{eq:V2 L2 a} and \eqref{eq:V2 L2 b} are targeted for user 2 with total rate $\rho_2=R_1+2R_2$, and finally, user 3 can successfully recover its requested file from the messages intended for users 1 and 2, i.e., $\rho_3=0$. Based on Proposition~\ref{prop:AWGN}, the target rates can be delivered to the users with superposition coding of Gaussian codewords  satisfying \eqref{eq:codebook}, with a minimum power value given in \eqref{eq:power}.

\subsection{Proposed Scheme}\label{sec:scheme general}
 This section presents the  proposed centralized caching and delivery scheme, which generalizes Example 2 to an arbitrary number of users, and achieves the transmit power value claimed in Theorem~\ref{thm:ach power}. Similarly to the schemes in \cite{qiandeniz2018icc,hassanzadeh2017rate,hassanzadeh2017broadcast}, the proposed scheme treats the sublibraries independently: 1) the cache capacity is divided among $N$ sublibraries, 2) for each demand realization, the server identifies the messages that need to be delivered to each user, independently across sublibraries, using a modified version of the scheme proposed in \cite{qiandeniz2018icc}, and 3) the server employs superposition coding to reliably communicate coded messages over the Gaussian BC.

\subsubsection{\bf Placement Phase}\label{placementphase}
Cache contents are identified separately for different sublibraries, each with a  different level of commonness. Let $\boldsymbol \pi = (\pi_1,\dots,\pi_N)$ denote the cache allocation vector, where $\pi_\ell\in[0,1]$ denotes the fraction of the normalized cache capacity $M$ allocated to sublibrary $L_\ell$, with $\sum_{\ell=1}^N \pi_\ell=1$. We will later optimize $\boldsymbol \pi$ to minimize the required total power. For a given $\boldsymbol \pi$, placement for sublibrary $L_\ell$ is carried out using the prefetching scheme proposed in~\cite{yu2017exact} as follows. Let
\begin{equation}\label{tl}
t_\ell\triangleq \frac{K\pi_\ell M}{\binom{N}{\ell}R_\ell}, ~  t_\ell\in[0,K],
\end{equation}
which is not necessarily an integer. We address this by memory-sharing among neighboring integer points, $t_\ell^A \triangleq \lfloor t_\ell \rfloor$ and $t_\ell^B \triangleq\lfloor t_\ell \rfloor+1$, and divide each subfile $\overline{W}_{\mathcal{S}}\in L_\ell$ into two non-overlapping parts. More specifically, $\overline{W}_{\mathcal{S}}=(\overline{W}^A_{\mathcal{S}}, \overline{W}^B_{\mathcal{S}})$, where  $\overline{W}^A_{\mathcal{S}}$ is at rate $(t_\ell^B-t_\ell)R_\ell$, while $\overline{W}^B_{\mathcal{S}}$ is at rate $(t_\ell- t_\ell^A)R_\ell$. The prefetching policy of \cite{yu2017exact} is implemented separately for $\{\overline{W}^A_{\mathcal{S}}: \mathcal{S}\in L_\ell\}$ and $\{\overline{W}^B_{\mathcal{S}}: \mathcal{S}\in L_\ell\}$. Each  $\overline{W}^A_{\mathcal{S}}$ is split into $\binom{K}{t_l^A}$ non-overlapping equal-length parts, each of size $n( t_\ell^B-t_\ell)R_\ell/\binom{K}{ t_\ell^A} $ bits. These parts are assigned to sets $\mathcal A \subseteq [K]$ of size $|\mathcal A| = t_\ell^A$. We denote the part assigned to set  $\mathcal A$ by $\overline{W}^A_{\mathcal{S}, \mathcal{A}}$; therefore,
\begin{equation}
\overline{W}^A_{\mathcal{S}}= \{   \overline{W}^A_{\mathcal{S}, \mathcal{A}} :\mathcal{A}\subseteq [K] ,\, |\mathcal{A}|=  t_\ell^A \}. \notag
\end{equation}
   Similarly, each  $\overline{W}^B_{\mathcal{S}}$ is split into $\binom{K}{t_l^B}$ non-overlapping equal-length parts, which are labeled as 
\begin{equation}
\overline{W}^B_{\mathcal{S}}= \{   \overline{W}^B_{\mathcal{S}, \mathcal{B}} :\mathcal{B}\subseteq [K] ,\, |\mathcal{B}|=  t_\ell^B \}. \notag
\end{equation}

User $k$ caches parts  $\overline{W}^A_{\mathcal{S}, \mathcal{A}}$   if $k \in \mathcal{A}$, and parts $\overline{W}^B_{\mathcal{S}, \mathcal{B}}$   if $k \in \mathcal{B}$. With this placement strategy,  for each subfile in sublibrary $L_\ell$,  $\binom{K-1}{ t_\ell^A-1}$ distinct parts from $\overline{W}^A_{\mathcal{S}}$, and $\binom{K-1}{t_\ell^B-1}$ distinct parts from  $\overline{W}^B_{\mathcal{S}}$, are placed in each user's cache, amounting for a total of $nt_\ell R_\ell/K$ bits, which satisfies the capacity constraint of $n\pi_\ell M$ bits. 

\subsubsection{\bf Delivery Phase}\label{deliveryphase}
Delivering a file from a library of correlated files can be considered as a multiple-demand problem \cite{qiandeniz2018icc,hassanzadeh2017rate,hassanzadeh2017broadcast}. For demand vector $\bf d$, user $k$ needs $ \binom{N-1}{\ell-1}$ subfiles from sublibrary $L_\ell$. Since the sublibraries are treated independently, message $V_{k,\mathbf{d}}$, targeted at user $k$, constitutes the messages computed from all the sublibraries, i.e.,
\begin{equation}
V_{k,\mathbf{d}} =   \bigcup\limits_{\ell=1}^N V_{k,\mathbf{d}}(L_\ell),\label{eq:messages}
\end{equation}
where  $V_{k,\mathbf{d}}(L_\ell)$ denotes the set of messages from sublibrary $L_\ell$ targeted at user $k$. They are determined using Algorithm~1, which is based on \cite[Algorithms 1, 2]{qiandeniz2018icc}. The main idea is to treat subfiles $\{\overline{W}_{\mathcal S}: d_k \in \mathcal S\}$ that are not cached at user $k$, as different demands. The algorithm operates by partitioning all the requested subfiles from sublibrary $L_\ell$ into groups, such that each user requires at most one subfile in each group; resulting in a single-demand problem. 
\begin{algorithm}[H]\label{message:1}
\caption{Generate messages $\{V_{1,\mathbf{d}}(L_\ell),\dots, V_{K,\mathbf{d}}(L_\ell)\}$}
\label{groupingscheme}
\begin{algorithmic}[1]
\Statex
\small
\State{$V_{k,\mathbf{d}}(L_\ell) \leftarrow  \emptyset$, $\forall k \in \{1,\dots,K\}$}
\For{$r =1 ,\dots,\ell$}
\State{\small ${\mathcal W}_r= \{\overline{W}_{\mathcal{S}}: |S| = \ell,\; |\mathcal{S}\cap\mathcal{D}|=r \}$}
\State{$\mathfrak{S}_1,\dots, \mathfrak{S}_g$ $\leftarrow$ Group (${\mathcal W}_r$, $\mathcal{D}$, $\ell$, $r$)}  
\For{$i \in \{1 ,\dots,g\}$}
\State{$V^A_{1},\dots, V^A_K$ $\leftarrow$ Single-Demand ($A$, ${\mathfrak{S}_i}$, $t_\ell^A$)}
\State{$V^B_1,\dots, V^B_K$ $\leftarrow$ Single-Demand ($B$, ${\mathfrak{S}_i}$, $t_\ell^B$)}
\State{$V_{k,\mathbf{d}}(L_\ell)\leftarrow V_{k,\mathbf{d}}(L_\ell) \cup \{ V^A_k,V^B_k\}$, $\forall k \in \{1 ,\dots,K\}$}
\EndFor
\EndFor
\end{algorithmic}
\end{algorithm}
\begin{algorithm}
\begin{algorithmic}[1]
\Function {Group }{ ${\mathcal W}$, $\mathcal{D}$, $\ell$, $r$}\\
\textbf{Output:} Group demands $\mathfrak{S}_1,\dots, \mathfrak{S}_g$
\small
\State{$\mathcal{F} \leftarrow \mathcal{D}$, $\overline{\mathcal{F}}\leftarrow \emptyset$, $\overline{\mathcal{S}}\leftarrow \emptyset$,
$g=0$}
\While{$\mathcal{W}\neq \emptyset$}
\While{$\mathcal{F}\neq \emptyset$}
\If{$|\mathcal{F}|\geq r$}
\If{$\overline{\mathcal{F}}=\emptyset$}
\State{ Randomly pick \footnotesize{$\overline{W}_{\mathcal{S}}\in \mathcal{W}$} such that \footnotesize{$\mathcal{S}\cap{\mathcal{D}} \subseteq\mathcal{F}$}}
\State{
$\mathcal{W}\leftarrow \mathcal{W}/\overline{W}_{\mathcal{S}},\quad
\mathcal{F} \leftarrow  \mathcal{F}\setminus \mathcal{S}$}
\For{$d_k \in \mathcal{S}\cap{\mathcal{D}}$}
\State{$\mathcal{S}_{k} \leftarrow  \mathcal{S}$}
\EndFor
\Else {\For{$d_k \in \overline{\mathcal{F}}$}
\State{$\mathcal{S}_{k} \leftarrow  \overline{\mathcal{S}}$}
\EndFor}
\State{\small
$\mathcal{F} \leftarrow \mathcal{F}\setminus \overline{\mathcal{F}},\quad \overline{\mathcal{S}} \leftarrow  \emptyset,\quad \overline{\mathcal{F}} \leftarrow \emptyset, $}
\EndIf
\Else\State{\small Randomly pick $\overline{W}_{\mathcal{S}}\in \mathcal{W}$ such that $\mathcal{F}\subseteq \mathcal{S}$}
\For{$d_k \in \mathcal{F}$}
\State{$\mathcal{S}_{k} \leftarrow  \mathcal{S}$}
\EndFor
\small
\State{$\mathcal{F} \leftarrow \emptyset,\quad
 \overline{\mathcal{S}} \leftarrow  \mathcal{S},\quad \overline{\mathcal{F}} \leftarrow  \mathcal{S}\setminus  {\mathcal{F}}$}
\normalsize
\EndIf
\EndWhile
\State{$g=g+1$}
\State{$\mathfrak{S}_g= (\mathcal{S}_{1}, \dots, \mathcal{S}_{K})$}
\EndWhile
\EndFunction
\end{algorithmic}
\end{algorithm}

\begin{algorithm}
\begin{algorithmic}[1]
\Function{Single-Demand }{$C$, ${\mathfrak{S}}$, $t$}\\
%\{\overline{W}^C_{\mathcal S,\mathcal C}\}
\textbf{Input:} $\mathfrak{S}=(\mathcal{S}_{1}, \dots, \mathcal{S}_{K})$, $C\equiv \{\overline{W}^C_{\mathcal S,\mathcal C}\}$\\
\textbf{Output:} Coded messages $V_1,\dots,V_K$
\small
\State{$\mathcal{K} \leftarrow \{k: \mathcal{S}_{k}\notin \{\mathcal{S}_{1}, ..., \mathcal{S}_{k-1}\}\}$}
\For{$k\in \{1,\dots,K\}$}
\For{{\footnotesize$\mathcal{U}\subseteq [k+1:K]: |\mathcal{U}|= t, \sum\limits_{j \in \mathcal{K}}\mathbbm{1}\{j \in \mathcal{U}\cup\{k\}\} \geq 1$}}
\State{  $
V_k\leftarrow V_k \bigcup\left(\bigoplus\limits_{j\in \mathcal{U}\cup\{k\}} \overline{W}^C_{\mathcal{S}_{j}, \mathcal{U}\cup\{k\}\setminus \{j\}}\right)$}
\EndFor
\EndFor
\EndFunction

\end{algorithmic}
\end{algorithm}

For sublibrary $L_\ell$ messages, $V_{1,\mathbf{d}}(L_\ell),\dots, V_{K,\mathbf{d}}(L_\ell)$, are generated as follows:
  \begin{itemize}
      \item[$i)$] Group the requested subfiles: Let $\mathcal{D}\triangleq\{d_1, ..., d_K\}$ denote the set of distinct demands in $\bf d$. The subfiles that need to be delivered to at least $\ell$ users, are given by: 
      \begin{equation}\label{eq:set for empty}
      \{\overline{W}_{\mathcal{S}}: \mathcal{S} \subseteq\mathcal{D},\, |S| = \ell\}.
      \end{equation}
       Since each user can request multiple subfiles from \eqref{eq:set for empty}, they are grouped into multiple (possibly overlapping) sets with minimum cardinality, such that each group represents the demand set of a single-demand network with $K$ users, i.e., each user has a single demand within this group. The grouping process tries to minimize the number of distinct demands within each  single-demand network. For sublibrary $L_\ell$, where each subfile is required by $\ell$ distinct users, there are at most $\lceil |\mathcal{D}| /\ell\rceil+1$ subfiles in each group. Note that, the subfiles in \eqref{eq:set for empty} are not the only contents that need to be delivered from sublibrary $L_\ell$. Based on the demand vector, any subfile $\overline{W}_{\mathcal{S}}$ whose index $\mathcal{S}$ includes at least one of the indices in $\mathcal D$, i.e.,  $\mathcal{S}\cap\mathcal{D}\neq \emptyset$, is required for the lossless reconstruction of the corresponding requested file in $\mathcal{D}$. All such subfiles need to be identified, and grouped in a similar fashion. Subfiles in \eqref{eq:set for empty} correspond to $|\mathcal{S}\cap\mathcal{D}|=\ell$. For $r=1,\dots,\ell$, we define the requested subfiles  ${\mathcal W}_r$, as 
      \begin{equation}\label{eq:groups}
      {\mathcal W}_r \triangleq \{\overline{W}_{\mathcal{S}}: |\mathcal{S}| = \ell,\; |\mathcal{S}\cap\mathcal{D}|=r \}.\notag
      \end{equation}
      Then, each set ${\mathcal W}_r$ is grouped using the function GROUP in Algorithm~1, which assigns a demand vector $\mathfrak{S}_i=({\mathcal S}_1,\dots,{\mathcal S}_K)$ to each group, resulting in a single-demand network with $K$ users, where user $k$ requests subfile $\overline{W}_{\mathcal{S}_k}$.

      \item[$ii)$] Deliver the demands corresponding to each group: The groups formed above are treated independently in the delivery phase. More specifically, for a group with corresponding demand vector $\mathfrak{S}$, function SINGLE-DEMAND in Algorithm~1 identifies messages $V_1,\dots,V_K$ that need to be transmitted so that all the users recover their requested subfiles in $\mathfrak{S}$. These messages are computed using the scheme in \cite{yu2017exact}, and delivered over the degraded BC using the coding scheme in \cite{MohammadDenizerasureTCom}. The channel is taken into account by selecting the {\em weakest} users with distinct demands as {\em leaders}, i.e., the demand of a leader is not requested by any of the weaker users, $ \{k: \mathcal{S}_{k}\notin \{\mathcal{S}_{1}, ..., \mathcal{S}_{k-1}\}\}$, and then greedily broadcasting XORed messages that benefit at least one leader through superposition coding. Note that choosing the weakest user, among users requiring the same subfile $\overline{W}_{\mathcal S}$, as the leader, allows all the stronger users to decode the subfile through successive cancellation decoding. As mentioned previously, the proposed scheme uses memory-sharing to cache and deliver the subfiles in $L_\ell$, for the two parts $\overline{W}_{\mathcal{S}}^A$ and $\overline{W}_{\mathcal{S}}^B$; and therefore, function SINGLE-DEMAND is executed separately for both parts. 
   \end{itemize}
   
  Message $V_{k,\mathbf{d}}(L_\ell)$ targeted at user $k$ is the union of all the messages for sublibrary $L_\ell$ computed for each group identified from the subfile sets $\{\mathcal W_1,\dots, \mathcal W_\ell\}$, from which the overall message for user $k$, $V_{k,\mathbf{d}}$, is obtained by \eqref{eq:messages}. For a given demand vector $\bf d$, messages $V_{1,\mathbf d},\dots,V_{K,\mathbf d}$ can be reliably transmitted to users $1,\dots,K$, using a $K$-level Gaussian superposition codebook \cite{Bergmans1974,bergmans1973degradedBC}. The $k^\text{th}$-level codebook consists of $2^{n\rho_{k}}$ codewords, where $\rho_k$ is the total rate of the messages in $V_{k,\mathbf{d}}$. The total required transmit power is given by \eqref{eq:codebook} in Proposition \ref{prop:AWGN}.

\subsection{Achievable transmit power}
The worst-case transmit power of the scheme described above is presented next. 

  \begin{Theorem}\label{thm:ach power}
For the caching problem described in Section~\ref{sys}, the optimal memory-power function, $P^*(M)$, is upper bounded as
\small
\begin{align}
P^*(M)\leq  &\min\limits_{ {\boldsymbol \pi} = (\pi_1,\dots,\pi_N)} P_{UB}(M,  {\boldsymbol\pi}),\notag\\
&\qquad \mathrm{s.t.~}\quad   \sum\limits_{i=1}^N \pi_i\leq 1,\notag\\%\label{constrain}\\
&\qquad\qquad\quad  0\leq \pi_i \leq 1, ~~ i = 1,\dots,N,  \notag
\end{align}
\normalsize
where
\small
\begin{subequations}
\begin{align}
& P_{UB}(M,  {\boldsymbol\pi})\triangleq\sum\limits_{k=1}^{ K}\left(\frac{2^{2\hat{\rho}_k}-1}{h_k^2}\right) \prod\limits_{j=1}^{k-1}{2^{2\hat{\rho}_j}},\notag \\
&\hat{\rho}_k\triangleq \sum\limits_{\ell=1}^N \sum\limits_{r=\max\{\ell-N+K, 1\}}^{\min\{\ell, K\}  }\binom{N-K}{\ell-r}\binom{\min\{N, K\}-1}{r-1}\gamma_{k, \ell, r},\notag\\
& {\gamma}_{k,\ell, r}\triangleq \begin{cases}
  \Big( \frac{\binom{K-k}{\lfloor t_\ell\rfloor}}{\binom{K}{\lfloor t_\ell \rfloor}}(\lfloor t_\ell \rfloor+1-t) +\frac{\binom{K-k}{\lfloor t_\ell\rfloor+1}}{\binom{K}{\lfloor t_\ell\rfloor+1}}(t-\lfloor t_\ell \rfloor)\Big) R_\ell,\, \mathrm{if}\; k\in[   \lceil\frac{\min\{N, K\}}{r}\rceil +1 ], \\
\qquad\qquad0\qquad\qquad\qquad\quad\qquad\qquad\qquad\quad \mathrm{otherwise}
\end{cases}\notag\\
& t_{\ell} \triangleq \frac{K \pi_\ell M}{\binom{N}{\ell}R_\ell}.\notag
\end{align}
\end{subequations}
\normalsize
%, and  $\binom{0}{0}\triangleq 1$.
\begin{proof}
This transmit power is achieved by the coding scheme outlined in Algorithm~1. A detailed proof is given in Appendix B.
\end{proof}
\end{Theorem}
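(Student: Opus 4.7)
The plan is to fix an arbitrary cache allocation vector $\boldsymbol\pi$, compute the total rate $\hat{\rho}_k$ of coded messages produced by Algorithm~1 and directed at user~$k$ for the worst-case demand vector, and then apply Proposition~\ref{prop:AWGN} to translate the rate tuple $(\hat{\rho}_1,\dots,\hat{\rho}_K)$ into the power expression $P_{UB}(M,\boldsymbol\pi)$; minimizing over $\boldsymbol\pi$ subject to $\sum_\ell \pi_\ell \le 1$ then yields the stated upper bound. Because the placement and delivery are executed independently across sublibraries $L_1,\dots,L_N$ and $V_{k,\mathbf d} = \bigcup_\ell V_{k,\mathbf d}(L_\ell)$, the rate decomposes as $\hat{\rho}_k = \sum_{\ell=1}^N \hat{\rho}_k(L_\ell)$, reducing the task to bounding $\hat{\rho}_k(L_\ell)$ for each commonness level~$\ell$.

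For a fixed sublibrary $L_\ell$, the prefetching parameter $t_\ell = K\pi_\ell M/\binom{N}{\ell}R_\ell$ is generally non-integer, so I would analyze the $A$-piece at rate $(t_\ell^B - t_\ell)R_\ell$ and the $B$-piece at rate $(t_\ell - t_\ell^A)R_\ell$ separately through the function \textsc{single-demand} and combine them to recover the two-term convex combination appearing inside $\gamma_{k,\ell,r}$. I would then partition the requested subfiles of $L_\ell$ by $r \triangleq |\mathcal S \cap \mathcal D|$, where $\mathcal D$ is the set of distinct indices in $\mathbf d$. A direct combinatorial count of those subfiles in $L_\ell$ whose support contains a given requested index $d_k$ and intersects $\mathcal D$ in exactly $r$ positions reproduces the prefactor $\binom{N-K}{\ell-r}\binom{\min\{N,K\}-1}{r-1}$ appearing in the definition of $\hat{\rho}_k$.

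The core step is to analyze Algorithm~1 on a single group $\mathfrak S$ produced by \textsc{group}, which corresponds to a single-demand instance with $r$ distinct demands among $K$ users. I would invoke the delivery analysis of \cite{yu2017exact} adapted to the degraded BC as in \cite{MohammadDenizerasureTCom}: the weakest users with pairwise-distinct demands are selected as leaders, stronger users repeating an earlier demand decode for free via successive cancellation, and each leader indexed by $k$ is responsible for coded XORs at a per-subfile rate $\binom{K-k}{\lfloor t_\ell\rfloor}/\binom{K}{\lfloor t_\ell\rfloor}$ for the $A$-piece and analogously for the $B$-piece. This produces the case split in $\gamma_{k,\ell,r}$: only users indexed by $k \le \lceil \min\{N,K\}/r\rceil + 1$ can serve as leaders in an $r$-distinct-demand group, so $\gamma_{k,\ell,r}$ vanishes outside this range, and memory-sharing between the $A$- and $B$-pieces assembles the two-term expression.

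Summing $\gamma_{k,\ell,r}$ with the combinatorial weights over all $(\ell,r)$ then gives $\hat{\rho}_k$, and since $V_{1,\mathbf d},\dots,V_{K,\mathbf d}$ are delivered via a $K$-level Gaussian superposition codebook, Proposition~\ref{prop:AWGN} converts $(\hat{\rho}_1,\dots,\hat{\rho}_K)$ directly into $P_{UB}(M,\boldsymbol\pi)$. I expect the main obstacle to lie in the third paragraph: precisely identifying the leader set produced by \textsc{group} in every group, verifying that the threshold $\lceil \min\{N,K\}/r\rceil + 1$ is tight for the worst-case demand, and confirming that the per-leader rate formula is invariant under the particular (non-unique) choice of groupings Algorithm~1 may return, since any combinatorial undercount here would propagate through the superposition-coding rate constraints and invalidate the entire bound.
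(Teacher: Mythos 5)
Your overall route --- fix $\boldsymbol\pi$, decompose the rate across sublibraries and across the sets $\mathcal W_r$, count the $\binom{N-K}{\ell-r}\binom{\min\{N,K\}-1}{r-1}$ groups, run \textsc{single-demand} separately on the $A$- and $B$-pieces to get the memory-sharing convex combination, and finish with Proposition~\ref{prop:AWGN} --- is exactly the paper's. But the obstacle you flag in your last paragraph is real, and your third paragraph does not get past it: you assert that only users $k\le\lceil\min\{N,K\}/r\rceil+1$ can serve as leaders and that $\gamma_{k,\ell,r}$ therefore vanishes for larger $k$. Neither holds on a per-group basis. \textsc{group} can return a demand assignment $\mathfrak S=(\mathcal S_1,\dots,\mathcal S_K)$ in which the distinct demands sit at arbitrary user indices, so the leader set $\mathcal K=\{k:\mathcal S_k\notin\{\mathcal S_1,\dots,\mathcal S_{k-1}\}\}$ is only bounded in cardinality ($|\mathcal K|\le\lceil|\mathcal D|/r\rceil+1$), not in position; moreover a non-leader $k$ with $e_k>0$ leaders among the stronger users still carries XORs of positive total rate $\frac{\binom{K-k}{t_\ell^A}-\binom{K-k-e_k}{t_\ell^A}}{\binom{K}{t_\ell^A}}(t_\ell^B-t_\ell)R_\ell+\frac{\binom{K-k}{t_\ell^B}-\binom{K-k-e_k}{t_\ell^B}}{\binom{K}{t_\ell^B}}(t_\ell-t_\ell^A)R_\ell$. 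So the per-user rate tuple actually produced by Algorithm~1 is in general \emph{not} the tuple $(\gamma_{1,\ell,r},\dots,\gamma_{K,\ell,r})$ of the theorem, and summing your claimed rates would undercount some users and overcount others.

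The paper closes this gap with a rearrangement step that is missing from your argument. Writing $\hat\gamma_{k,\ell,r}(\mathcal K)$ for the exact per-user rates given leader set $\mathcal K$, it observes that $\sum_k\hat\gamma_{k,\ell,r}(\mathcal K)=\sum_k\hat\gamma_{k,\ell,r}([|\mathcal K|])$, while $\hat\gamma_{k,\ell,r}([|\mathcal K|])\ge\hat\gamma_{k,\ell,r}(\mathcal K)$ for the weak indices and the reverse inequality holds for the strong ones; since the incremental power $\Delta P$ from Proposition~\ref{prop:AWGN} can only increase when rate is shifted toward weaker users (each user's rate multiplies the power terms of all stronger users in \eqref{eq:power}), one gets $\Delta P(\hat\gamma(\mathcal K))\le\Delta P(\hat\gamma([|\mathcal K|]))\le\Delta P(\hat\gamma([\lceil|\mathcal D|/r\rceil+1]))$, the last step using $|\mathcal K|\le\lceil|\mathcal D|/r\rceil+1$ and monotonicity. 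The theorem's $\gamma_{k,\ell,r}$, with its indicator $k\in[\lceil\min\{N,K\}/r\rceil+1]$, is therefore an upper bound at the level of \emph{power}, not of the individual message rates; without this monotonicity argument your derivation cannot justify that indicator, and the concern you raise about the non-uniqueness of the grouping is resolved precisely because the bound depends on $\mathcal K$ only through $|\mathcal K|$.
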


\section{Coded Placement and Joint Encoding}\label{sec:scheme joint}
We propose an alternative joint cache-channel coding scheme, with coded placement which is more effective for small cache sizes. The scheme operates by constructing a multi-level superposition code, based on the demand realization, and piggyback part of the messages targeted at each user on the messages intended for weaker users. The piggyback coding is also employed in \cite{shirin2017broadcastit}, where all the cache capacity allowance is assigned to the weakest user, and in the delivery phase, part of the content required by each user is piggy-backed onto the message sent to the weakest user. We extend this scheme in two ways: the coded placement is implemented instead of uncoded placement, and the piggyback coding is applied to each layer of superposition code instead of just the first layer. Before presenting the general scheme description, we first provide a brief overview of cache-aided superposition coding, and then use an example to illustrate how part of the messages required by a stronger user can be piggy-backed onto the messages targeted at weaker users.

\subsection{Preliminaries}\label{sec:superposition}
 We extend the piggyback coding in \cite{shirin2017broadcastit} to the case when each user has cached contents. In a cache-aided $K$-user degraded Gaussian BC with $h_1^2\leq h_2^2\leq\dots\leq h_K^2$, where message $V_k^r$, with rate $\rho^r_k$, is locally available at user $k\in[K]$, message $V_k^c$, with rate $\rho_k^c$, can be reliably transmitted to user $k$, and message $V_k = (V_k^r, V_k^c) $, with rate $\rho_k =\rho_k^r + \rho_k^c $, can be decoded by users $k+1,\dots,K$, using $K$-level superposition coding as follows:
 
 \begin{itemize}
 \item \textit{Codebook construction}: The $k^\text{th}$ level codebook, denoted by $\mathcal C_k$, consists of $\lfloor 2^{n \rho_k^r}\rfloor \times \lfloor 2^{n \rho_k^c}\rfloor$ codewords of block length $n$, denoted by $x_k^n(v_k^r, v_k^c)$, $v_k^r \in \lfloor 2^{n \rho_k^r}\rfloor$, $v_k^c \in \lfloor 2^{n \rho_k^c}\rfloor$, which are arranged into $\lfloor 2^{n \rho_k^r}\rfloor$ rows and $\lfloor 2^{n \rho_k^c}\rfloor$ columns. The codewords in $\mathcal C_k$ are generated independently and identically distributed (i.i.d.) following $x_{k,i}\sim \mathcal{N}(0,P_k)$, $i\in [n]$.
 
\item \textit{Encoding at server}: For messages $V_1,\dots,V_K$ targeted at the users, the server transmits the superposition of the $K$ codewords $\sum\limits_{k=1}^K x_k^n(V_k^r, V_k^c)$ over the Gaussian BC.

\item \textit{Decoding at users}: User $k\in[K]$ receives the channel output 
\begin{equation}
Y^n_{k}=h_k\sum\limits_{k=1}^K x_k^n(V_k^r, V_k^c)+\sigma^n_{k},\notag
\end{equation}
and based on Proposition~\ref{prop:AWGN}, it can successfully decode messages $V_1, \dots, V_{k-1}$, by using successive decoding if 
\small
\begin{equation}\label{rccondition}
 \rho_j^r + \rho_j^c \leq C\left(\frac{h_k^2P_j}{1+h_k^2\sum\limits_{j'=k+1}^{K}P_{j'}}\right), ~~~~\forall j\in [k-1].
\end{equation}
\normalsize
%, considering all the codewords in the higher levels as noise,
Since user $k$ has access to $V^k_{r}$, it can extract the subcodebook $\Big\{x_k^n(V^r_{k},v^c_{k}): v^c_{k} \in \lfloor2^{n \rho_k^c}  \rfloor \Big\}$ from $\mathcal C_k$, and losslessly decode $V^c_{k}$, if
\small
\begin{equation}
 \rho_k^c \leq C\left(\frac{h_k^2P_k}{1+h_k^2\sum\limits_{j=k+1}^{K}P_j}\right).\notag
\end{equation}
\end{itemize}
\normalsize
If \eqref{rccondition} holds, users $k,\dots K$ can also decode messages $V_1, ..., V_{k-1}$ as they have better channel conditions. However, they do not have access to side information $V_k^c$, so for them to decode $V_k$ successfully, we need 
\small
 \begin{align}
   \rho_k^r + \rho_k^c  &\leq C\left(\frac{h_{k+1}^2P_k}{1+h_{k+1}^2\sum\limits_{j=k+1}^{K}P_{j}}\right).\notag
\end{align}
\normalsize

\textbf{Example 3}. Consider the model in Example 2, with the rates of the subfiles in three commonness levels given by $R_3\leq R_2\leq R_1$, and normalized cache capacity of $M = R_3$.

\noindent $\circ$ \textbf{Placement Phase:} First, we divide each of the subfiles in $L_1$ and $L_2$ into two parts: 
\begin{itemize}
    \item Sublibrary $L_1$: $\overline{W}_{\{i\}}=(\overline{W}^C_{\{i\}}, \overline{W}^U_{\{i\}})$, $i\in [3]$, where $\overline{W}^C_{\{i\}}$ has rate $R_3$ while $\overline{W}^U_{\{i\}}$ has rate $R_1-R_3$.
    \item Sublibrary $L_2$:  $\overline{W}_{\mathcal{S}}=(\overline{W}^C_{\mathcal{S}}, \overline{W}^U_{\mathcal{S}})$, $\overline{W}_{\mathcal{S}}\in L_2$, where $\overline{W}^C_{\mathcal{S}}$ has rate $R_3$ while $\overline{W}^U_{\mathcal{S}}$ has rate $R_1-R_3$.
\end{itemize}
Then, users 1, 2, and 3 cache coded contents as follows:
\small
\begin{align}
Z_1 &=  \overline{W}_{\{123\}}, \notag\\
Z_2 & =  \overline{W}^C_{\{12\}}\oplus \overline{W}^C_{\{13\}}\oplus \overline{W}^C_{\{23\}}, \notag\\
Z_3 & =  \overline{W}^C_{\{1\}}\oplus \overline{W}^C_{\{2\}}\oplus \overline{W}^C_{\{3\}},\notag
\end{align}
\normalsize
such that the weaker users prefetch a linear combination of the subfiles shared among more files.

\noindent $\circ$ \textbf{Delivery Phase:}
\begin{itemize}
    \item Codebook construction: For the demand vector $\mathbf{d}=(1, 2, 3)$, as explained in Section ~\ref{sec:superposition}, to apply piggyback coding, the server generates a 3-level Gaussian superposition codebook as follows:
   \begin{itemize}
     \item[-] $\mathcal C_1$ with $\lfloor 2^{nR_3}\rfloor$ rows and $\lfloor2^{n(R_1+2R_2)}\rfloor$ columns,
     \item[-] $\mathcal C_2$ with $\lfloor 2^{nR_3}\rfloor$ rows and $\lfloor2^{n(R_1+R_2-R_3)}\rfloor$ columns,
     \item[-] $\mathcal C_3$ with $\lfloor 2^{nR_3}\rfloor$ rows and $\lfloor2^{n(R_1-R_3)}\rfloor$ columns,
 \end{itemize}
which contain i.i.d. codewords of length $n$ generated from zero-mean Gaussian distributions with variances $P_1$, $P_2$, and $P_3$, respectively.
    
    \item Encoding at server: The server transmits 
    \small
    \begin{align}
    X^n({\mathbf W},{\mathbf d}) = x_1^n(V_{1,\bf d}^r,V_{1,\bf d}^c) + x_2^n(V_{2,\bf d}^r,V_{2,\bf d}^c) + x_3^n(V_{3,\bf d}^r,V_{3,\bf d}^c), \notag
    \end{align}
    \normalsize
    where
    \small
    \begin{align}
    \hspace{0.5cm} & V_{1,\bf d}^r =Z_1, 
    & &V_{1,\bf d}^c  = (  \overline{W}_{\{1\}},
    \overline{W}_{\{12\}},
    \overline{W}_{\{13\}}   ), \hspace{2cm} \notag\\ 
    &V_{2,\bf d}^r =Z_2, 
    & &V_{2,\bf d}^c  = (  \overline{W}^U_{\{23\}}, \overline{W}_{\{2\}}  ), \notag\\
   & V_{3,\bf d}^r = 1,  
   & &V_{3,\bf d}^c  = \overline{W}^U_{\{3\}} .\notag
 \end{align}
    \normalsize
    \item Decoding at users: 
     \begin{itemize}
    \item User 1 has the weakest channel gain and needs to receive all the subfiles it has not prefetched, i.e., $\Big\{\overline{W}_{\{1\}},
 \overline{W}_{\{12\}},$ $
 \overline{W}_{\{13\}}\Big\}$. Using its cached content $\overline{W}_{\{123\}}$, it can extract the subcodebook $\Big\{x_1^n(\overline{W}_{\{123\}} ,\, v^c_{1}):$ $ v^c_{1} \in [   2^{n(R_1+2R_2)}  ] \Big\}$ from $\mathcal C_1$ and losslessly recovers the required parts if 
 \small
\begin{equation}
 R_1+2R_2 \leq C\left(\frac{h_1^2P_1}{1+h_1^2(P_2+P_3)}\right).\label{eq:piggy first}
\end{equation}
\normalsize
\item User 2 requires $\{\overline{W}_{\{2\}},
 \overline{W}_{\{12\}},
 \overline{W}_{\{23\}}, \overline{W}_{\{123\}} \}$, and if
 \small
 \begin{equation}\label{piggyback2}
 R_1+2R_2+R_3 \leq C\left(\frac{h_2^2P_1}{1+h_2^2(P_2+P_3)}\right),
\end{equation}
\normalsize
it can first decode $\overline{W}_{\{123\}}$, $\overline{W}_{\{12\}}$ and $\overline{W}_{\{13\}}$ from the codebook $x_1^n$, and can retrieve $\overline{W}^C_{\{23\}}$ from its cached contents. It can then decode the remaining parts required to reconstruct file $W_2$, i.e., parts $\overline{W}_{\{2\}}$ and $\overline{W}^U_{\{23\}}$ from $x_2^n$ using its side information $Z_2$ if
\small
 \begin{equation}
 R_1+R_2-R_3 \leq C\left(\frac{h_2^2P_2}{1+h_2^2\, P_3}\right).
\end{equation}
\normalsize
    
\item User 3 can decode messages $\{\overline{W}_{\{1\}},\overline{W}_{\{12\}}, \overline{W}_{\{13\}}, \overline{W}_{\{123\}}\}$ from $x_1^n$ if \eqref{piggyback2} is satisfied, since $h_3\geq h_2$,
 %\begin{equation}
 %C\left(\frac{h_2^2P_1}{1+h_2^2(P_2+P_3)}\right) \leq C\left(\frac{h_3^2P_1}{1+h_3^2(P_2+P_3)}\right),
%\end{equation}
and decode messages $\{\overline{W}_{\{2\}},\overline{W}^U_{\{23\}}, \overline{W}^C_{\{12\}}\oplus \overline{W}^C_{\{13\}}\oplus \overline{W}^C_{\{23\}}\}$ from $x_2^n$, if
\small
\begin{equation}
R_1+R_2 \leq C\left(\frac{h_2^2P_2}{1+h_2^2\, P_3}\right).
\end{equation}
\normalsize
With $\overline{W}_{\{12\}}, \overline{W}_{\{13\}}$, it can decode $\overline{W}^C_{23}$ using the coded side information in its cache. Then, only subfile $\overline{W}_{\{3\}}$ is left for user 3 to fully recover $W_3$. To this end, it can recover $\overline{W}^C_{\{3\}}$ from its cache as it has already decoded $\overline{W}_1^c$ and $\overline{W}_2^c$. Finally, it can decode $\overline{W}^U_{\{3\}}$ from $x_3^n$ if
\small
 \begin{equation}
 R_1-R_3 \leq C\left( h_3^2P_3\right). \label{eq:piggy last}
\end{equation}
\normalsize
\end{itemize}
\end{itemize}
    
 The transmission powers $P_1, P_2, P_3$ are chosen to satisfy Eqs~\eqref{eq:piggy first}-\eqref{eq:piggy last}. As it can been seen from the example, the idea is to jointly encode the cached contents of each user together with the message intended for it. This additional message does not interfere with the weak user as it already has it cached, while the stronger users can recover this information without any additional transmission cost.

\subsection{Proposed Scheme}\label{subsec:piggy general}
We now present the proposed coded caching and joint encoding scheme for a general setting with $N\geq K$, and a normalized cache capacity $M\leq$ $ \min\{$ $R_{N-K+1},$ $\dots,$ $R_N\}$. We will explain later how the scheme can be applied to arbitrary number of users and files.

\subsubsection{\bf Placement Phase}
Each subfile $\overline{W}_{\mathcal{S}}$, $\mathcal{S}\subseteq [N]$,  is divided into two non-overlapping parts,  $\overline{W}_{\mathcal{S}}=(\overline{W}^{C}_{\mathcal{S}}, \overline{W}^{U}_{\mathcal{S}})$, where $\overline{W}^{C}_{\mathcal{S}}$ is at rate $M$, and  $\overline{W}^{U}_{\mathcal{S}}$ is at rate  $R_{|\mathcal{S}|}-M$.  User $k\in[K]$ caches a linear  combination of all the parts $\overline{W}^C_{\mathcal{S}}$ in sublibrary $L_{N-k+1}$ as 
%\begin{equation}\label{cachedop}
%Z_k=\bigoplus \limits_{\substack{\mathcal{S}\subseteq [N]\\ |\mathcal{S}|=N-k+1}}\overline{W}^C_{\mathcal{S}},
%\end{equation}
\small
\begin{equation}\label{cachedop}
Z_k=\bigoplus \limits_{ \mathcal S \subseteq [N]   :\,      |\mathcal S| = N-k+1}\overline{W}^C_{\mathcal{S}}, 
%L_{N-k+1} =  \Big\{ \mathcal S \subseteq [N]   :\,      |\mathcal S| = N-k+1\Big\}
\end{equation}
\normalsize
which satisfies the cache capacity constraint $M$. 

\subsubsection{\bf Delivery Phase} 
For any demand vector $\mathbf{d}=(d_1, ..., d_K)\in [N]^K$, let $N_e(\mathbf{d})$ denote the number of distinct requests in demand ${\bf d}$, and let 
$\mathcal U \triangleq \{k_1, ..., k_{N_e(\mathbf{d})}\}$ denote the set of users with the weakest channels that request distinct files such that $|\mathcal U|=N_e(\mathbf{d})$, where $k_1 < \cdots < k_{N_e(\mathbf{d})}$. 

\begin{itemize} 
\item Codebook construction: The server constructs a $N_e(\mathbf{d})$-level Gaussian superposition codebook, such that for $i\in [N_e]$, the $i^\text{th}$-level codebook contains $2^{n \rho_i}$ codewords, where $ \rho_i = \sum\limits_{\ell=1}^{N-i+1}\binom{N-i+1}{\ell-1}R_{\ell}$.
If $k_i=i$, the codewords are arranged in an array of $2^{nM}$ rows and $2^{n(\rho_i-M)}$ columns; otherwise, i.e., $k_i\neq i$, they are arranged into $1$ row and $2^{n\rho_i}$ columns. For each element of the array we generate an i.i.d. codeword $x_i^n(v_i^r, v_i^c)$, $v_i^r \in[2^{nM}]$ and $v_i^c \in[2^{n(\rho_i-M)}]$ if $k_i=i$; $v_i^r =1$ and $v_i^c \in[2^{n\rho_i}]$ if $k_i\neq i$, with distribution $\mathcal{N}(0,P_i)$.

\item Encoding  at  server: The server transmits codeword $\sum\limits_{i=1}^{N_e(\bf d)} x_i^n(V_{i,\bf d}^r, V_{i,\bf d}^c)$, where, for $i\in[N_e(\bf d)]$, message
\small
\begin{align}\label{rowmessage}
&V_{i,\mathbf{d}}^r=\begin{cases}Z_{k_i}, &\text{if}\quad k_i=i,\\
\emptyset,  &\text{if}\quad k_i\neq i,
\end{cases}
\end{align}
\normalsize
is targeted at users $k_i+1,\dots,K$, and message
\small
\begin{align}\label{columnmessage}
&V_{i,\mathbf{d}}^c=\begin{cases}  \overline{W}^U_{\widetilde{\mathcal S}}\; \bigcup\; \Big\{   \overline{W}_{\mathcal{S}} \in  L_{N-i+1}:   \mathcal{S}\neq\widetilde{\mathcal S} \Big\}\;\bigcup\; \Big\{   \overline{W}_{\mathcal{S}} \notin L_{N-i+1} :   \mathcal{S}\in \mathcal D_{i} \Big\}~&\text{if}\quad k_i=i,\\
\Big\{   \overline{W}_{\mathcal{S}}  :   \mathcal{S}   \in \mathcal D_{i}  \Big\}\;~&\text{if}\quad k_i\neq i,
\end{cases}
\end{align}
\normalsize
$\text{for any }\widetilde{\mathcal S}$ such that $\overline{W}_{\widetilde{\mathcal S}} \in L_{N-i+1}$, is targeted at users $k_i,\dots,K$, where
\small
\begin{align}
&\mathcal D_i \triangleq  \Big\{\mathcal S: \mathcal S \subseteq [N]\setminus\{d_{k_1},\dots,d_{k_{i-1}}\} ,\, d_i \in \mathcal{S} ,\,      |\mathcal S| \leq N-i+1\Big\}\notag
\end{align}
\normalsize
is the set of subfiles required to reconstruct file $W_{d_{k_i}}$ requested by user $k_i$, but not common to any of the files requested by the weaker users, i.e., $W_{d_1},\dots, W_{d_{k_{i-1}}}$. Codeword $x_i^n(V_{i,\bf d}^r, V_{i,\bf d}^c)$ is generated with average power $P_i$ such that 
\small
\begin{align}
|V_{i,\mathbf{d}}^c|+|V_{i,\mathbf{d}}^r| &\leq C\left(\frac{h_{k_i+1}^2P_i}{1+h_{k_i+1}^2\sum\limits_{j=i+1}^{N_e(\bf d)}P_{j}}\right),\label{eq:cond all}\\
|V_{i,\mathbf{d}}^c| &\leq C\left(\frac{h_{k_i}^2P_i}{1+h_{k_i}^2\sum\limits_{j=i+1}^{N_e(\bf d)}P_j}\right), \label{eq:cond row}
\end{align}
\normalsize
where $|V_{i,\mathbf{d}}^c|$ and $|V_{i,\mathbf{d}}^r|$ denote the rates of $V_{i,\mathbf{d}}^c$ and $V_{i,\mathbf{d}}^r$, respectively.

\item Decoding at  users: 
\begin{itemize} 
\item For $i\in[N_e(\mathbf{d})]$, user $k_i$ decodes all its desired messages in two steps.

 {\em Step 1}: In the first step, user $k_i$ recovers all the messages $\{ V_{i',\bf d}^r,\, V_{i',\bf d}^c : i'\in [i-1]\}$, which correspond to all the subfiles required to reconstruct files $W_{d_{k_1}},\dots,W_{d_{k_{i-1}}}$, by decoding the first $i-1$ level codewords. This can be done with arbitrarily low error probability since condition \eqref{eq:cond all} is satisfied. 
 
 {\em Step 2}: We note that $V_{i,\bf d}^r$ is either in user $k_i$'s local cache or is an empty message. Thus, user $k_i$ always has the knowledge of $V_{i,\bf d}^r$, which together with \eqref{eq:cond row} is satisfied, it can allows the user to successfully decode $V_{i,\bf d}^c$. 
  
  Overall, user $k_i$ recovers the subfiles $\{\overline{W}_{\mathcal{S}}: \mathcal{S}\subseteq [N], \mathcal{S}\cap \{d_1, ..., d_{k-1}\}\neq \emptyset, d_k\in \mathcal{S}\}$ in the first step, and the subfiles  $\{\overline{W}_{\mathcal{S}}: \mathcal{S}\subseteq [N]\setminus \{d_1, ..., d_{k-1}\}, d_k\in \mathcal{S}\}$ in the second step, from which it can fully reconstruct $W_{d_k}$.

\item If $k \notin \mathcal U$, then user $k$ has requested the same file as a weaker user $k_i \in \mathcal U$, i.e., $k_i\leq k$. Therefore, user $k$ can decode all the messages targeted at user $k_i$, and since \eqref{eq:cond row} is satisfied, user $k$ can also recover $V^r_{i,\bf d}$, from which it can fully reconstruct $W_{d_k}$ .
\end{itemize}

\end{itemize}

\begin{Remark}
We consider more files than users, i.e., $N\geq K$, but the analysis for case $N<K$ follows directly. Note that, since each user stores a coded combination of all the subfiles in a sublibrary, with more users than files, i.e., $N<K$, the $K-N$ strongest users would be able to decode all of their required subfiles from the messages targeted at users $1,\dots,N$, rendering the cached contents $Z_{N+1},\dots, Z_{K}$ unutilized.
\end{Remark}

For any demand vector $\mathbf d$, the  total  transmit  power required by the proposed caching scheme can be upper bounded as in the following theorem.  
\begin{Theorem}\label{thm:piggy}
For the caching problem described in Section~\ref{sys}, with cache capacity 
\small
\[M\leq \min \Big\{  R_{\zeta}  ,\dots,\,R_{N} \Big\},\quad \zeta   \triangleq \max\{N-K,1\}\]
\normalsize
an upper bound on the optimal memory-power function, $P^*(M)$, is given by 
\small
\begin{equation}
P^*(M)\leq P^{\text{PB}}_{UB}(M)\triangleq  \sum\limits_{k=1}^K P_k(M),\notag \end{equation}
\normalsize
%\begin{equation}
%P^*(M)\leq \sum\limits_{k=1}^K P_k,
%\end{equation}
where
\small
\begin{align}
&P_k(M)= \notag\\
&\begin{cases}0, &\mbox{if}~ k\notin [\min\{N, K\}]\\
\max\bigg\{\Big(\frac{2^{2\tilde{\rho}_k} -1}{h^2_k}\Big)\Big(1+ h^2_k \, \sum\limits_{j=k+1}^{K}P_j \Big) ,\Big(\frac{2^{2(\tilde{\rho}_k+M)} -1}{h^2_{k+1}}\Big)\Big(1+ h^2_{k+1} \,
\sum\limits_{j=k+1}^{K}P_j \Big) \, \bigg\}, &\mbox{if}~ k\in [\min\{N, K\}],
\end{cases}\notag
\end{align}
\normalsize
with $\tilde{\rho}_k$ defined as in \eqref{rstar}.
\end{Theorem}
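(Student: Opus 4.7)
The plan is to show that the coded placement and joint encoding scheme described in Section~\ref{sec:scheme joint} achieves the transmit power claimed in the theorem for every demand vector $\mathbf{d}\in [N]^K$. It suffices to analyze the worst-case demand configuration in which $N_e(\mathbf{d})=\min\{N,K\}$ users request distinct files and the ordered set of distinct-request users $\mathcal{U}=\{k_1,\ldots,k_{N_e(\mathbf{d})}\}$ satisfies $k_i=i$: having fewer distinct requests simply removes superposition layers, while any configuration with $k_i>i$ only makes the side-information constraint \eqref{eq:cond row} easier to satisfy because the channel gain $h_{k_i}^2$ appearing in the numerator is larger.

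First I would compute the rates of the row and column messages at each superposition layer $i\in[\min\{N,K\}]$. Under $k_i=i$, the row message $V_{i,\mathbf{d}}^r=Z_i$ has rate exactly $M$. For the column message, the set $\mathcal{D}_i$ contains, at each commonness level $\ell\in[1,N-i+1]$, precisely $\binom{N-i}{\ell-1}$ subfiles, namely those of size $\ell$ that contain $d_i$ and avoid $\{d_{k_1},\ldots,d_{k_{i-1}}\}$. At level $N-i+1$ there is a unique such subfile, $\widetilde{\mathcal{S}}=[N]\setminus\{d_{k_1},\ldots,d_{k_{i-1}}\}$, transmitted only through its non-cached part $\overline{W}^U_{\widetilde{\mathcal{S}}}$ of rate $R_{N-i+1}-M$, while the subfiles at lower levels are transmitted in full. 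Summing these contributions gives $|V_{i,\mathbf{d}}^c|=\sum_{\ell=1}^{N-i+1}\binom{N-i}{\ell-1}R_\ell-M=\tilde{\rho}_i$, where the maximum with $0$ in \eqref{rstar} is vacuous because the hypothesis $M\leq\min\{R_\zeta,\ldots,R_N\}$ forces $R_{N-i+1}-M\geq 0$.

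Next I would verify reliable decoding for every user. For the distinct-request user $k_i=i$, the two-step decoding succeeds whenever $P_i$ satisfies both \eqref{eq:cond all} and \eqref{eq:cond row}; substituting $|V_{i,\mathbf{d}}^c|=\tilde{\rho}_i$ and $|V_{i,\mathbf{d}}^c|+|V_{i,\mathbf{d}}^r|=\tilde{\rho}_i+M$ and inverting the capacity formula yields exactly the two lower bounds on $P_i$ whose maximum defines $P_k(M)$ in the theorem. For any user $k\notin\mathcal{U}$, there is a weaker user $k_i\leq k$ requesting the same file; its superior channel allows it to jointly decode $(V^r_{i,\mathbf{d}},V^c_{i,\mathbf{d}})$ without relying on $Z_{k_i}$, again precisely because \eqref{eq:cond all} holds. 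The case $N<K$ is handled as in the Remark: the strongest $K-N$ users recover their files at no additional cost from messages already targeted at the first $N$ users.

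The main obstacle will be the combinatorial bookkeeping required in step~2 of the decoding at user $k_i=i$. To extract $\overline{W}^C_{\widetilde{\mathcal{S}}}$ from its cache $Z_i=\bigoplus_{|\mathcal{S}|=N-i+1}\overline{W}^C_{\mathcal{S}}$, user $i$ needs every other $\overline{W}^C_{\mathcal{S}'}$ at commonness level $N-i+1$. I would argue that any such $\mathcal{S}'\neq\widetilde{\mathcal{S}}$ must contain at least one index from $\{d_{k_1},\ldots,d_{k_{i-1}}\}$, hence $\mathcal{S}'\in\mathcal{D}_{i'}$ for some $i'<i$, so that $\overline{W}_{\mathcal{S}'}$ has already been delivered in full through the level-$i'$ codebook and is available to user $i$ by the end of step~1. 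Once this point is verified, the power at layer $i$ follows from Proposition~\ref{prop:AWGN}, and summing over $i\in[\min\{N,K\}]$ produces the upper bound $P^{\text{PB}}_{UB}(M)$.
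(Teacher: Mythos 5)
Your proposal is correct and follows essentially the same route as the paper's Appendix B: compute the row/column message rates at each superposition layer, invert the decoding conditions \eqref{eq:cond all} and \eqref{eq:cond row} via Proposition~\ref{prop:AWGN} to get the per-layer power, and identify the worst case as the one where the weakest $\min\{N,K\}$ users request distinct files. Your write-up is in fact somewhat more careful than the paper's, since you explicitly verify that $|V_{i,\mathbf{d}}^c|=\tilde{\rho}_i$ and that user $i$ can peel $\overline{W}^C_{\widetilde{\mathcal{S}}}$ out of its coded cache using subfiles already delivered at lower layers, steps the appendix leaves implicit.
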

 \begin{proof}
The proof is given in Appendix~\ref{app:piggy}, which is derived by characterizing the transmit power achieved by the caching and delivery scheme described in Section~\ref{subsec:piggy general}.
%The proof follows from comparing the required transmission power by the proposed scheme, with the lower bound in Theorem~\ref{lowerbound}.
\end{proof}
\begin{Remark}
We observe that, if
\small
\begin{equation}\label{optimalcondition}
\Big(\frac{2^{2\tilde{\rho}_k} -1}{h^2_k}\Big)\Big(1+ h^2_k \, \sum\limits_{j=k+1}^{K}P_j \Big) \geq \Big(\frac{2^{2(\tilde{\rho}_k+M)} -1}{h^2_{k+1}}\Big)\Big(1+ h^2_{k+1} \,
\sum\limits_{j=k+1}^{K}P_j \Big), \qquad\forall k \in [\min\{N, K\}],
\end{equation}
\normalsize
then $P_{UB}^{\text{PB}}(M)=P_{LB}(M)$, i.e., the transmission power required by the coded placement and joint encoding scheme meets the lower bound. However, it does not necessarily mean that the proposed scheme is optimal as the lower bound is derived assuming uncoded placement phase, while the proposed scheme caches contents in a coded manner. Nevertheless, we can conclude that the performance of the proposed scheme is no worse than the optimal scheme with uncoded placement phase.  
\end{Remark}

\section{Numerical results}\label{sec:numerical}
We evaluate the performance of the scheme proposed in Sec~\ref{sec:scheme general}, referred to as the {\em correlation-aware} scheme, by comparing its memory-power trade-off with the lower bound presented in Theorem~\ref{lowerbound}, as well as with the trade-off achieved by the scheme proposed in \cite{amiri2017gaussian}, which does not exploit the correlation among files, referred to as the {\em correlation-ignorant} scheme. In the latter scheme, we treat each file as a distinct sequence of bits. We consider a setting with $N=5$ files, $K=5$ users, file rate $R=1$, and cache capacity $M=0.5$. Channel gains are modeled as $1/h_k^2=2-0.2(k-1)$, for $k=1, ..., 5$. We denote by $\alpha_\ell$ the file-length fraction that belongs to sublibrary $L_\ell$, i.e.,
\small
\begin{equation}
\alpha_\ell=\binom{N-1}{\ell-1}\frac{R_{\ell}}{R}, \quad \sum\limits_{\ell=1}^N \alpha_\ell=1.\notag 
\end{equation}
\normalsize

\begin{figure}
\centering
\label{fig:1}
\includegraphics[width=0.65\linewidth]{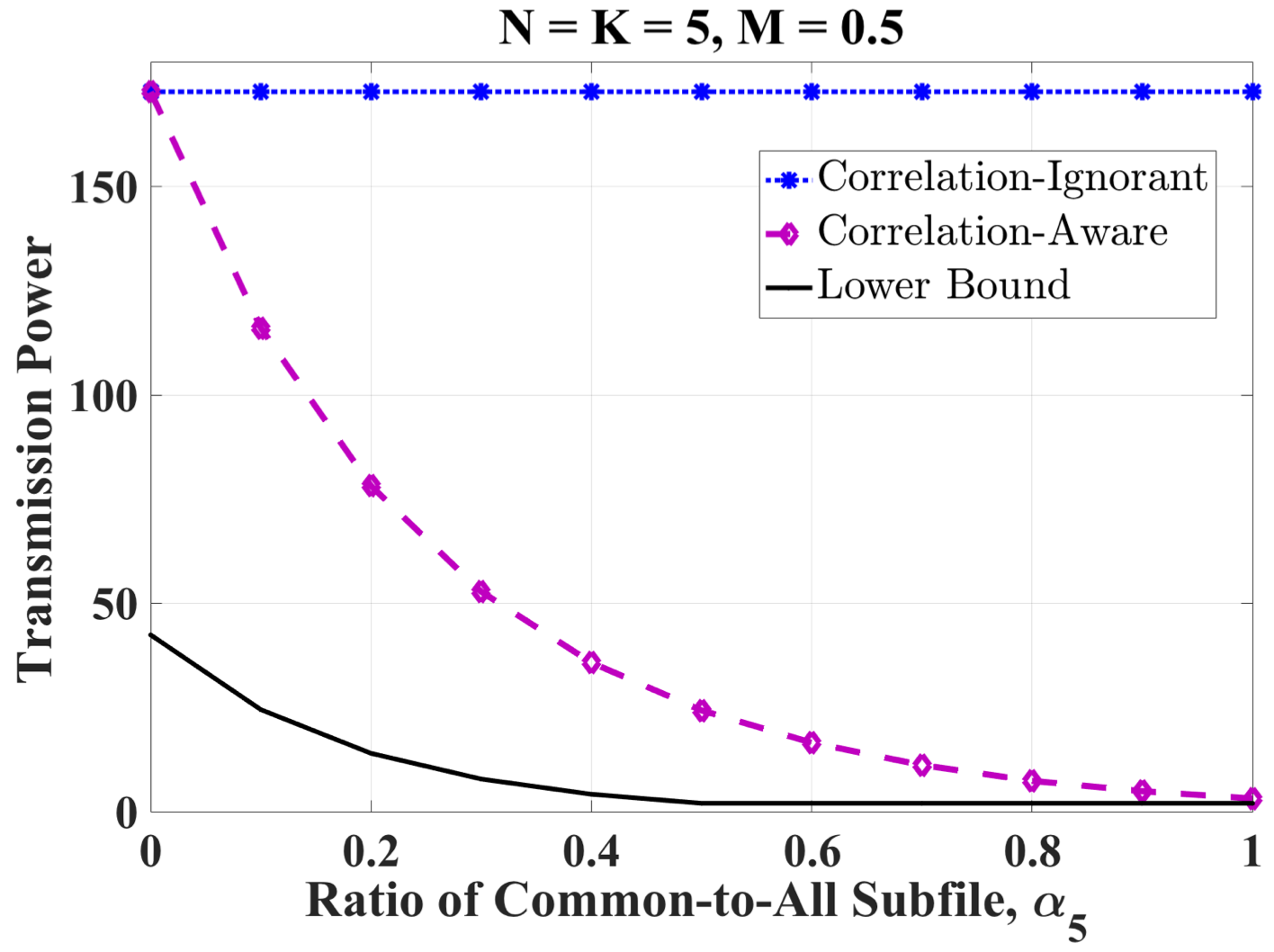}
\caption{Transmission power vs. common subfile fraction, when  the files are composed of private and common-to-all subfiles. The channel gains are given as $1/h_k^2=2-0.2(k-1)$, $k=1, ..., 5$. The correlation-aware scheme corresponds to the superposition coding scheme in Section \ref{sec:scheme general}.}\label{fig:commontoall}
\end{figure}
\begin{figure}

\label{fig:2}
\centering
\includegraphics[width=0.65\linewidth]{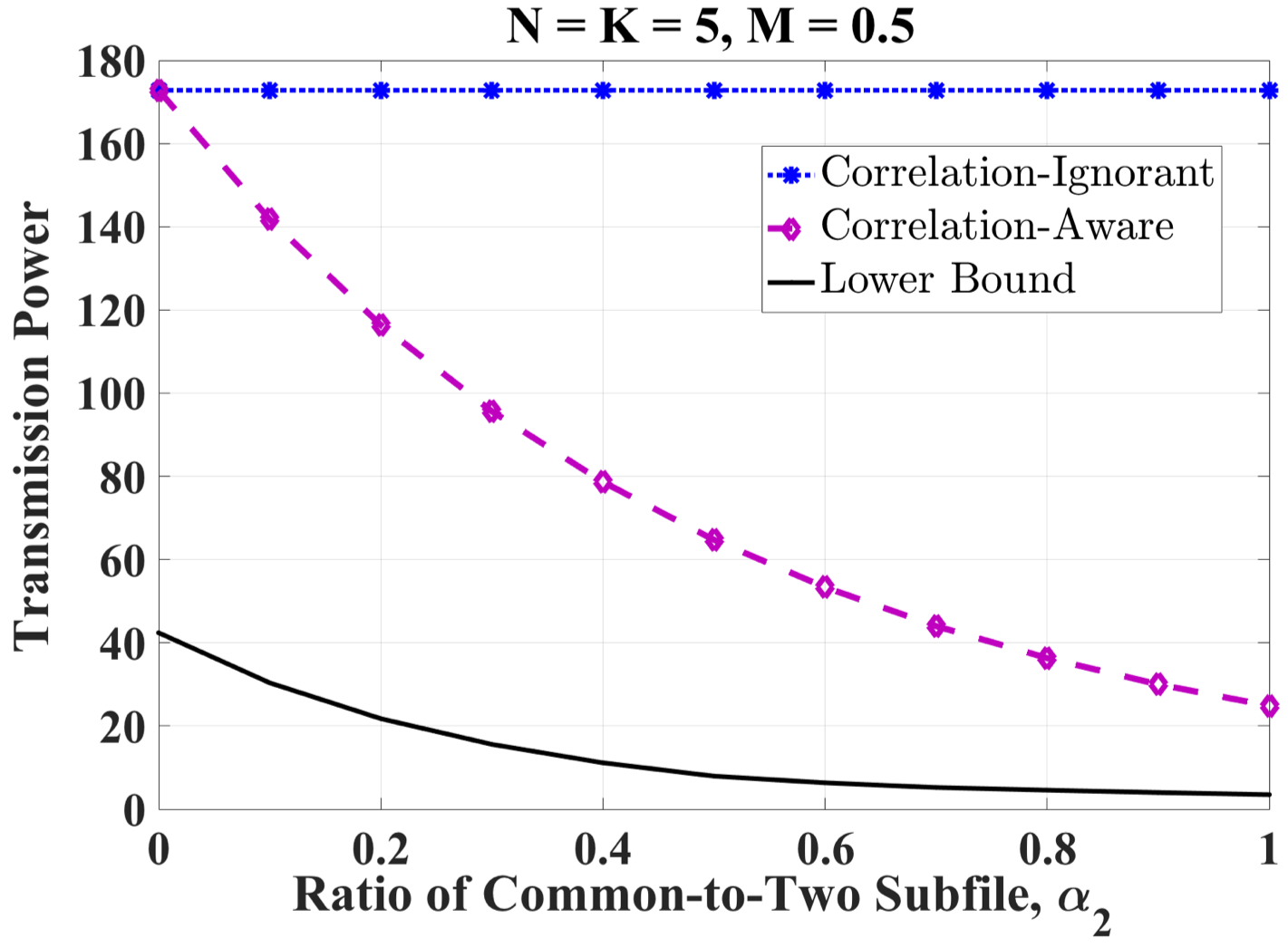}
\caption{Transmission power vs. common subfile fraction, when  the files are composed of private and common-to-two subfiles. The channel gains are given as $1/h_k^2=2-0.2(k-1)$, $k=1, ..., 5$. The correlation-aware scheme corresponds to the superposition coding scheme in Section \ref{sec:scheme general}.}\label{fig:pairwise}
\end{figure}

Fig.~\ref{fig:commontoall} displays the memory-power trade-off for a database with files composed of one {\em private} subfile, which is exclusive to that file, and a {\em common-to-all} subfile, which is shared among all the files, i.e., $\alpha_1+\alpha_5=1$, $\alpha_2=\alpha_3=\alpha_4=0$. In Fig.~\ref{fig:pairwise} the trade-off is shown when the files, in addition to private subfiles, have pairwise correlations through {\em common-to-two} subfiles, that is $\alpha_1+\alpha_2=1$, $\alpha_3=\alpha_4=\alpha_5=0$. We plot the minimum transmit power as a function of the common parts of the files for both scenarios, i.e., with respect to $\alpha_5$ and $\alpha_2$, respectively. In both settings the transmission power achieved by the correlation-aware scheme decreases remarkably, as the portion of common subfiles  increases, while the performance of the  correlation-ignorant scheme does not improve. It is observed that the transmission power drops faster in Fig.~\ref{fig:commontoall} compared to Fig.~\ref{fig:pairwise} for increasing ratio of common subfiles, in both the correlation-aware scheme and the lower bound. This is due to the reduction in the amount of content that needs to be sent over the Gaussian BC for a higher level of correlation among the files. For example, in Fig. \ref{fig:commontoall}, as $\alpha_5$ approaches $1$, all the files become the same, and hence, only a message of rate $R/2$ needs to be multicasted to all the users, whereas in the setting of Fig. \ref{fig:pairwise}, with $\alpha_2=1$, we still have $\binom{N}{2}=10$ distinct subfiles each shared by only two files. It is also observed that the gap between the transmit power upper and lower bounds is smaller in Fig.~\ref{fig:commontoall} compared to Fig.~\ref{fig:pairwise}.

\begin{figure}
\centering
\includegraphics[width=0.75\linewidth]{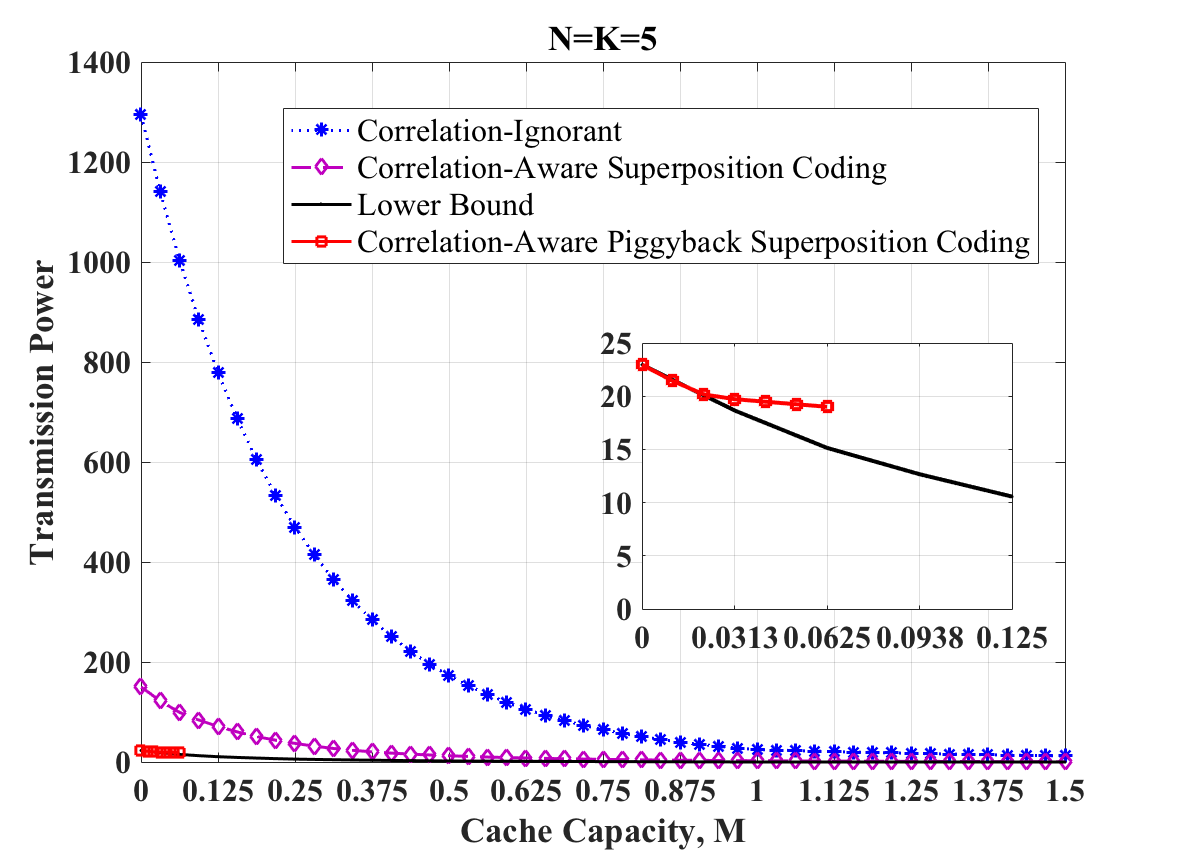}
\caption{Transmission power vs. cache capacity,$1/h_k^2=2-0.2(k-1)$, for $k=1, ..., K$. The portions of subfiles of different correlation level are specified by $\alpha_1=\alpha_5=1/16$, $\alpha_2=\alpha_4=1/4$, and $\alpha_3=3/8$. Correlation-aware superposition coding and piggyback superposition coding correspond to the schemes proposed in Section \ref{sec:scheme general} and Section \ref{subsec:piggy general}, respectively. }\label{fig:jointencoding1}
\end{figure}

\begin{figure}
\centering
\includegraphics[width=0.75\linewidth]{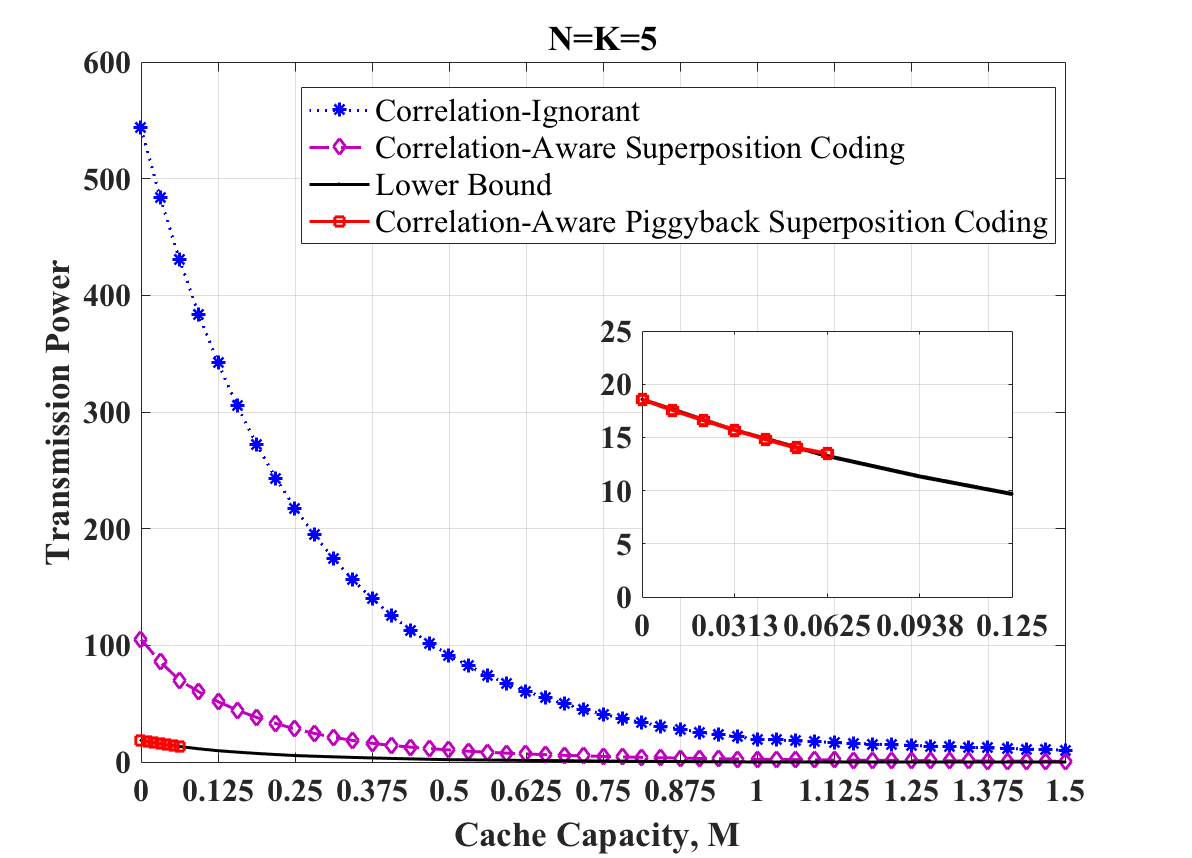}
\caption{Transmission power vs. cache capacity,$1/h_k^2=2-0.4(k-1)$, for $k=1, ..., K$. The portions of subfiles of different correlation level are specified by $\alpha_1=\alpha_5=1/16$, $\alpha_2=\alpha_4=1/4$, and $\alpha_3=3/8$. Correlation-aware superposition coding and piggyback superposition coding correspond to the schemes proposed in Section \ref{sec:scheme general} and Section \ref{subsec:piggy general}, respectively. }\label{fig:jointencoding2}
\end{figure}

Next, we consider the same setting with  $N=5$, $K=5$, and $R=1$ as before, but let $R_1=R_2=\cdots=R_N$, i.e., each subfile has the same size, which yields $\alpha_1=\alpha_5=1/16$, $\alpha_2=\alpha_4=1/4$, and $\alpha_3=3/8$. In Fig. \ref{fig:jointencoding1}, the channel gains are given as $1/h_k^2=2-0.2(k-1)$, for $k=1, ..., 5$, while in Fig. \ref{fig:jointencoding2},  $1/h_k^2=2-0.4(k-1)$, for $k=1, ..., 5$. We compare the proposed scheme presented in Section \ref{sec:scheme sep}, referred to as \textit{Correlation-Aware Superposition Coding}, the joint encoding scheme with coded placement presented in Section~\ref{sec:scheme joint}, referred to \textit{Correlation-Aware Piggyback Superpostion Coding}, with the correlation ignorant scheme, and the lower bound as well. In both cases, the joint encoding scheme with coded placement can be applied when $M\leq 1/16$. We observe that the correlation-aware schemes significantly outperform the correlation-ignorant scheme in terms of transmission power, and the joint encoding scheme with coded placement further improves the energy efficiency remarkably and achieves approximately the lower bound. However, while in the zoomed-in figure of Fig. \ref{fig:jointencoding2}, the joint scheme meets the lower bound, it can be seen in the zoomed-in figure of Fig. \ref{fig:jointencoding1} that the joint scheme results in a slightly higher transmission power than the lower bound when the cache capacity is larger than a certain value. That is because the channel of a stronger user is not good enough to receive all the contents (which are the cached contents at the weaker user) piggybacked on the message intended for the weaker user without any additional cost, such that \eqref{optimalcondition} is not satisfied.

\section{Conclusions}\label{sec:conc}
We have investigated caching and delivery of correlated contents over a $K$-user Gaussian BC for users with equal-capacity caches. Correlation among files is captured by the component subfiles shared among different subsets of files. We have first derived a lower bound on the minimum transmission power with which any possible demand combination can be satisfied, assuming uncoded cache placement. We have then presented two upper bounds on the memory-power trade-off with  correlation-aware cache-aided coding schemes. The first scheme generates coded packets according to user demands, which are then delivered to users using superposition coding, where each coded packet is targeted at the weakest user demanding it. We have also proposed a coded placement scheme with joint encoding, in which the cache contents and user demands are encoded jointly, such that the weak users can use their cache contents for decoding, while the stronger users can decode both without additional resources.

Our numerical results indicate that the proposed coding schemes greatly improve the energy-efficiency of delivery over Gaussian BCs compared to correlation-ignorant schemes. For small cache memory sizes, the joint encoding scheme with coded caching requires a lower transmit power, which meets the lower bound assuming uncoded placement. 
%This is achieved by piggybacking the cached contents of each user to the message targeted at it, which provides the stronger users virtual access to its cache. We emphasize that although the lower bound is met, the optimality of the piggyback superposition is unknown. 
A tight lower bound without the limitation to uncoded placement is currently under investigation.  

\appendices
\section{Proof of Theorem 2}
To prove Theorem 2, we show the required transmission power by the proposed caching and delivery scheme presented in Section~\ref{sec:scheme sep} is upper bounded by $P_{UB}(M, \boldsymbol \pi)$ for any demand combination $\mathbf{d}$, given cache allocation vector $\boldsymbol \pi$. 

Recall that for a given demand combination $\mathbf{d}=(d_1, ..., d_K)$, $\mathcal{D}=\{d_1, ..., d_K\}$. For $\ell\in [N]$, $r\in [\max\{\ell-N+|\mathcal{D}|, 1\}: \min\{\ell, |\mathcal{D}|\}]$, ${\mathcal W}_r = \{\overline{W}_{\mathcal{S}}: |S| = \ell,\; |\mathcal{S}\cap\mathcal{D}|=r\}$ consists of $\binom{N-|\mathcal{D}|}{\ell-r}\binom{|\mathcal{D}|}{r}$ subfiles. Function GROUP generates $\binom{N-|\mathcal{D}|}{\ell-r}\binom{|\mathcal{D}|-1}{r-1}$ groups based on ${\mathcal W}_r$. For each group $\mathfrak{S}_i=(\mathcal{S}_1, ..., \mathcal{S}_K)$, Algorithm 1 runs function SINGLE-DEMAND twice (code line $6$ to $7$) to generates two set of coded messages $V_1^A$, ..., $V_K^A$ and $V_1^B$, ..., $V_K^B$ corresponding to $\{\overline{W}^A_{\mathcal S, \mathcal A}\}$ and $\{\overline{W}^B_{\mathcal S, \mathcal B}\}$, respectively. We recall that $\mathcal{K}$ is the set of the weakest users with distinct demands according to $\mathfrak{S}_i$, where $\mathcal{K}\triangleq\{k: \mathcal{S}_{k}\notin \{\mathcal{S}_{1}, ..., \mathcal{S}_{k-1}\}\}$ (line $4$ of function SINGLE-DEMAND), and denote by $e_k$ the number of leaders after user $k$, i.e., $e_k\triangleq \sum\limits_{k'=k+1}^K \mathbbm{1}\{k' \in \mathcal{K}\}$. Then for $k\in [K]$, the total size of $V_k^A$ and $V_k^B$ denoted by ${\hat{\gamma}}_{k,\ell, r}(\mathcal{K})$ (normalized by $n$), i.e., ${\hat{\gamma}}_{k,\ell, r}(\mathcal{K})\triangleq |V_k^A|+|V_k^B|$, is given by
\small
\begin{align}\label{gemma1}
{\hat{\gamma}}_{k,\ell, r}(\mathcal{K})=\begin{cases}
\frac{\binom{K-k}{t_\ell^A}}{\binom{K}{t_\ell^A}}(t_\ell^B-t_\ell)R_l+\frac{\binom{K-k}{t_\ell^B}}{\binom{K}{t_\ell^B}}(t_\ell-t_\ell^A)R_\ell~~~&\mbox{if}~~ k\in \mathcal{K},\\
\frac{\binom{K-k}{t_\ell^A}-\binom{K-k-e_k}{t_\ell^A}}{\binom{K}{t_\ell^A}}(t_\ell^B-t_\ell)R_\ell+\frac{\binom{K-k}{t_\ell^B}-\binom{K-k-e_k}{t_\ell^B}}{\binom{K}{t_\ell^B}}(t_\ell-t_\ell^A)R_\ell~~~&\mbox{if}~~ k\notin \mathcal{K}.
\end{cases}
\end{align}
\normalsize
Thus, the additional power required to send coded messages $V^A_1, ...., V_K^A$, and $V_1^B, ..., V_K^B$, denoted by $\Delta P$, is given as 
\small
\begin{align}
&\Delta P ({\hat{\gamma}}_{1,\ell, r}(\mathcal{K}), ..., {\hat{\gamma}}_{K,\ell, r}(\mathcal{K}))\nonumber\\
&\qquad\qquad=\sum\limits_{k=1}^{K}\left(\frac{2^{2(\overline{\rho}_k+{\hat{\gamma}}_{k,\ell, r}(\mathcal{K}))}-1}{h_k^2}\right) \prod\limits_{j=1}^{k-1}\frac{2^{2(\overline{\rho}_j+{\hat{\gamma}}_{j,\ell, r}(\mathcal{K}))}}{h_j^2}-\sum\limits_{k=1}^{K}\left(\frac{2^{2\overline{\rho}_k}-1}{h_k^2}\right) \prod\limits_{j=1}^{k-1}\frac{2^{2\overline{\rho}_j}}{h_j^2}, \notag
\end{align}
\normalsize
where $\overline{\rho}_1, ..., \overline{\rho}_1 \in \mathbbm{R}^+$ denote the total rate of all the other coded message required to be sent over the Gaussian BC.

Note that 
\small
\begin{align}\label{gemma2}
{\hat{\gamma}}_{k,\ell, r}([|\mathcal{K}|])=\begin{cases}
\frac{\binom{K-k}{t_\ell^A}}{\binom{K}{t_\ell^A}}(t_\ell^B-t_\ell)R_\ell+\frac{\binom{K-k}{t_\ell^B}}{\binom{K}{t_\ell^B}}(t_\ell-t_\ell^A)R_\ell~~~&\mbox{if}~~ k\in [|\mathcal{K}|],\\
0~~&\mbox{if}~~ k\notin [|\mathcal{K}|].
\end{cases}
\end{align}
\normalsize
Compare \eqref{gemma1} and \eqref{gemma2}. We have then
\small
\begin{equation}
\sum\limits_{k=1}^K {\hat{\gamma}}_{k,\ell, r}(\mathcal{K})=\sum\limits_{k=1}^K {\hat{\gamma}}_{k,\ell, r}([|\mathcal{K}|])=\frac{\binom{K}{t_\ell^A+1}-\binom{K-|\mathcal{K}|}{t_\ell^A+1}}{\binom{K}{t_\ell^A}}(t_\ell^B-t_\ell)R_\ell+\frac{\binom{K}{t_\ell^B+1}-\binom{K-|\mathcal{K}|}{t_\ell^B+1}}{\binom{K}{t_\ell^B}}(t_\ell-t_\ell^A)R_\ell,\notag     
\end{equation}
\normalsize
while ${\hat{\gamma}}_{k,\ell, r}([|\mathcal{K}|]) \geq {\hat{\gamma}}_{k,\ell, r}(\mathcal{K})$ if $k\in [E_d]$; ${\hat{\gamma}}_{k,\ell, r}([|\mathcal{K}|])\leq {\hat{\gamma}}_{k,\ell, r}(\mathcal{K})$ otherwise. It yields  
\small
\begin{equation}\label{plower}
\Delta P ({\hat{\gamma}}_{1,\ell, r}(\mathcal{K}), ..., {\hat{\gamma}}_{K,\ell, r}(\mathcal{K})) \leq \Delta P ({\hat{\gamma}}_{1,\ell, r}([|\mathcal{K}|]), ..., {\hat{\gamma}}_{K,\ell, r}([|\mathcal{K}|]))
\end{equation}
\normalsize
Note that each group generated by function GROUP has at most $\lceil |\mathcal{D}|/r \rceil+1$ distinct elements, which corresponds to at most $\lceil |\mathcal{D}|/r \rceil+1$ distinct elements by running function SINGLE-DEMAND, i.e., $|\mathcal{K}|\leq \lceil |\mathcal{D}|/r \rceil+1$. We have then ${\hat{\gamma}}_{k,\ell, r}([\lceil|\mathcal{D}|/r \rceil+1]) \geq {\hat{\gamma}}_{k,\ell, r}(\mathcal{K})$, $\forall k \in [K]$. With \eqref{plower},
\small
\begin{equation}\label{plower1}
\Delta P ({\hat{\gamma}}_{1,\ell, r}(\mathcal{K}), ..., {\hat{\gamma}}_{K,\ell, r}(\mathcal{K})) \leq \Delta P ({\hat{\gamma}}_{1,\ell, r}([\lceil|\mathcal{D}|/r \rceil+1]), ..., {\hat{\gamma}}_{K,\ell, r}([\lceil|\mathcal{D}|/r \rceil+1])).\nonumber
\end{equation}
\normalsize
Following the same procedure with all the groups, we can lower bound the total transmission power to satisfy demand combination $\mathbf{d}$ as follows
\small
\begin{align}
& P(M, \boldsymbol \pi, \mathcal{D})\leq \sum\limits_{k=1}^{ K}\left(\frac{2^{2\overline{\rho}_k}-1}{h_k^2}\right) \prod\limits_{j=1}^{k-1}{2^{2\overline{\rho}_j}}, \nonumber\\
&\overline{\rho}_k\triangleq \sum\limits_{\ell=1}^N \sum\limits_{r=\max\{\ell-N+|\mathcal{D}|, 1\}}^{\min\{\ell, |\mathcal{D}|\}  }\binom{N-|\mathcal{D}|}{\ell-r}\binom{|\mathcal{D}|}{r}\hat{\gamma}_{k, \ell, r}([\lceil|\mathcal{D}|/r \rceil+1]), \nonumber
\end{align}
\normalsize
which by letting $\mathcal{D}=[\min\{N, K\}]$, proves Theorem 2.

\section{Proof of Theorem~\ref{thm:piggy}}\label{app:piggy}
For a demand vector $\mathbf{d}$, $\forall \mathbf{d}\in [N]^K$, the proposed scheme presented in Section~\ref{subsec:piggy general} constructs a $Ne(\mathbf{d})$-level Gaussain superposition code. We denote the minimum total transmission power required by this scheme to satisfy $\mathbf{d}$ by $P(\mathbf{d}, M)=\sum\limits_{i=1}^{Ne(\mathbf{d})}P_i(\mathbf{d}, M)$, where $P_i(\mathbf{d}, M)$ is the power allocated to generate the $i^{\text{th}}$ level codeword. With \eqref{rowmessage} and \eqref{columnmessage}, we have 
\small
\begin{subequations}
\begin{align}
|V_{i,\mathbf{d}}^r|=M, |V_{i,\mathbf{d}}^c| =\rho_i-M, ~&\text{if}~k_i=i,\notag\\
|V_{i,\mathbf{d}}^r|=0, |V_{i,\mathbf{d}}^c| =\rho_i, ~&\text{if}~k_i\neq i.\notag
\end{align}
\end{subequations}
\normalsize
Thus, according to \eqref{eq:cond all} and \eqref{eq:cond row}, it yields 
\small
\begin{align}
P_i(\mathbf{d}, M)=\begin{cases}\max\bigg\{ \,  \Big(\frac{2^{2\tilde{\rho}_i} -1}{h^2_i}\Big)\Big(1+ h^2_i \, \sum\limits_{j=i+1}^{K}P_j \Big) ,\;\Big(\frac{2^{2(\tilde{\rho}_i+M)} -1}{h^2_{i+1}}\Big)\Big(1+ h^2_{i+1} \,
\sum\limits_{j=i+1}^{K}P_j \Big) \, \bigg\}, ~&\mbox{if}~k_i=i,\\
\Big(\frac{2^{2(\tilde{\rho}_i+M)} -1}{h^2_{i}}\Big)\Big(1+ h^2_{i} \,
\sum\limits_{j=i+1}^{K}P_j \Big), ~&\mbox{if}~k_i\neq i,
\end{cases}\notag
\end{align}
\normalsize
$\forall i\in [Ne(\mathbf{d})]$. It is straightforward to see that the worst-case demand combination $\mathbf{d}_{\text{worst}}$ that maximizes $P(\mathbf{d}, M)$, i.e., $\mathbf{d}_{\text{worst}}=\argmax\limits_{\mathbf{d}}P(\mathbf{d}, M)$, is such that $Ne(\mathbf{d})=\min\{N, K\}$ and $\mathcal U=[\min\{N, K\}]$, i.e., the weakest $\min\{N, K\}$ users request distinct files. And we have $P(\mathbf{d}_{\text{worst}}, M)=P^{\text{PB}}_{UB}(M)$, which completes the proof of Theorem~\ref{thm:piggy}.

\bibliographystyle{IEEEtran}
\bibliography{Journal}
\end{document}